\setlist[description]{labelindent=\parindent}
\newenvironment{algorr}[2]{%
	\bigskip
	\noindent\textbf{Algorithm #1:} {\itshape#2}
	\begin{enumerate}[leftmargin=1.5em,label=\texttt{\arabic*}]%
		\smallskip
		\setlength{\itemsep}{0pt}\leftmargin=0pt}{%
	\end{enumerate}%
}
\newtheorem{thm}{Theorem}
\newtheorem{defn}[thm]{Definition}
\newtheorem{lem}[thm]{Lemma}
\newtheorem*{rep@theorem}{\rep@title}
\newcommand{\newreptheorem}[2]{%
	\newenvironment{rep#1}[1]{%
    \def\rep@title{#2~\ref{##1} (restated)}%
		\begin{rep@theorem}}%
		{\end{rep@theorem}}}
\DeclarePairedDelimiter\paren{\lparen}{\rparen}
\DeclarePairedDelimiter\abs{\lvert}{\rvert}
\DeclarePairedDelimiter\set{\{}{\}}
\DeclarePairedDelimiterX\setc[2]{\{}{\}}{#1:#2}
\DeclarePairedDelimiterX\parenc[2]{\lparen}{\rparen}{\,#1 \;\delimsize\vert\; #2\,}
\definecolor{blueish}{rgb}{0.122, 0.435, 0.698}
\definecolor{dagstuhlyellow}{rgb}{0.99,0.78,0.07}
\newcommand{\defproblem}[3]{
\vspace{4pt}\noindent
\begin{tcolorbox}[%
  hbox,
  colframe=white,
  colback=dagstuhlyellow!10!white,
  arc=10pt]
  \begin{varwidth}{0.9\textwidth}
    {\textbf{Problem}} #1 expects as input: #2\\
    {\textbf{Task:}} #3
  \end{varwidth}
\end{tcolorbox}
}
\newcommand{\counter}{\textnormal{\texttt{EdgeCount}}}
\newcommand{\cnfsat}[1]{\textnormal{\ensuremath{#1}-SAT}\xspace}
\newcommand{\ccnfsat}[1]{\textnormal{\#\ensuremath{#1}-SAT}\xspace}
\newcommand{\SAT}[1]{\textnormal{SAT}{(#1)}}
\newcommand{\pr}{\mathbb{P}}
\newcommand{\E}{\mathbb{E}}
\newcommand{\var}{\textnormal{Var}}
\newcommand{\cc}[1]{\ensuremath{\mathrm{#1}}}
\newcommand{\pp}[1]{\textup{#1}}
\newcommand{\op}[1]{\ensuremath{\operatorname{#1}}}
\newcommand{\NP}{\cc{NP}}
\newcommand{\sharpP}{\cc{\#P}}
\newcommand{\poly}{\op{poly}}
\newcommand{\GF}[1]{\op{GF}{(#1)}}
\newcommand{\N}{\mathbb{N}}
\newcommand{\Z}{\mathbb{Z}}
\newcommand{\R}{\mathbb{R}}
\renewcommand{\epsilon}{\varepsilon}
\renewcommand{\vec}[1]{\boldsymbol{#1}} 
\newcommand{\floor}[1]{\left\lfloor{}#1\right\rfloor}
\newcommand{\ceil}[1]{\left\lceil{}#1\right\rceil}
\newcommand{\dov}{\pp{OV}}
\newcommand{\cov}{\pp{\#OV}}
\newcommand{\dnwt}{\pp{NWT}}
\newcommand{\cnwt}{\pp{\#NWT}}
\newcommand{\dtsum}{\pp{3SUM}}
\newcommand{\ctsum}{\pp{\#3SUM}}
\newcommand{\aref}[1]{Step~\ref{#1}} 
\newcommand{\indo}[1]{\textnormal{ind}_{#1}} 
\newcommand{\adjo}[1]{\textnormal{adj}_{#1}} 
\newcommand{\eb}[1]{\partial(#1)}            
\newcommand{\pf}[1]{{#1}^*}
\newcommand{\nbh}[1]{N{(#1)}}
\begin{document}

\title[Fine-grained reductions from approximate counting to decision]{\texorpdfstring{Fine-grained reductions\\from approximate counting to decision}{Fine-grained reductions from approximate counting to decision}}

\author{Holger Dell}
\orcid{0000-0001-8955-0786}%chktex 8
\affiliation{%
	\institution{Goethe University Frankfurt}
	\city{Frankfurt}
	\country{Germany}
  }
\affiliation{%
	\institution{IT University of Copenhagen}
	\city{Copenhagen}
	\country{Denmark}
  }
\affiliation{%
  \institution{Basic Algorithms Research Copenhagen (BARC)}
  \city{Copenhagen}
  \country{Denmark}
}
\email{dell@cs.uni-frankfurt.de}

\author{John Lapinskas}
\affiliation{%
  \institution{University of Bristol}
  \city{Bristol}
  \country{UK}
}
\email{john.lapinskas@bristol.ac.uk}

\begin{CCSXML}
  <ccs2012>
  <concept>
  <concept_id>10003752.10003777.10003779</concept_id>
  <concept_desc>Theory of computation~Problems, reductions and completeness</concept_desc>
  <concept_significance>300</concept_significance>
  </concept>
  <concept>
  <concept_id>10003752.10003809.10003635</concept_id>
  <concept_desc>Theory of computation~Graph algorithms analysis</concept_desc>
  <concept_significance>300</concept_significance>
  </concept>
  <concept>
  <concept_id>10002950.10003624.10003633.10010917</concept_id>
  <concept_desc>Mathematics of computing~Graph algorithms</concept_desc>
  <concept_significance>300</concept_significance>
  </concept>
  </ccs2012>
\end{CCSXML}
  
\ccsdesc[300]{Theory of computation~Problems, reductions and completeness}
\ccsdesc[300]{Theory of computation~Graph algorithms analysis}
\ccsdesc[300]{Mathematics of computing~Graph algorithms}

\keywords{Fine-grained complexity, Approximate Counting, Satisfiability}

\begin{abstract}
	In this paper, we introduce a general framework for fine-grained reductions of approximate counting problems to their decision versions. (Thus we use an oracle that decides whether any witness exists to multiplicatively approximate the number of witnesses with minimal overhead.) This mirrors a foundational result of Sipser (STOC 1983) and Stockmeyer~(SICOMP 1985) in the polynomial-time setting, and a similar result of Müller~(IWPEC 2006) in the FPT setting. Using our framework, we obtain such reductions for some of the most important problems in fine-grained complexity: the Orthogonal Vectors problem, 3SUM, and the Negative-Weight Triangle problem (which is closely related to All-Pairs Shortest Path). While all these problems have simple algorithms over which it is conjectured that no polynomial improvement is possible, our reductions would remain interesting even if these conjectures were proved; they have only polylogarithmic overhead, and can therefore be applied to subpolynomial improvements such as the $n^3/\exp\paren{\Theta(\sqrt{\log n})}$-time algorithm for the Negative-Weight Triangle problem due to Williams~(STOC~2014). Our framework is also general enough to apply to versions of the problems for which more efficient algorithms are known. For example, the Orthogonal Vectors problem over $\GF{m}^d$ for constant~$m$ can be solved in time $n\cdot\poly(d)$ by a result of Williams and Yu (SODA 2014); our result implies that we can approximately count the number of orthogonal pairs with essentially the same running time.
		
  We also provide a fine-grained reduction from approximate \#SAT to SAT. Suppose the Strong Exponential Time Hypothesis (SETH) is false, so that for some $1<c<2$ and all $k$ there is an $O(c^n)$-time algorithm for \cnfsat{k}. Then we prove that for all $k$, there is an $O(\paren{c+o(1)}^n)$-time algorithm for approximate \#$k$-SAT. In particular, our result implies that the Exponential Time Hypothesis (ETH) is equivalent to the seemingly-weaker statement that there is no algorithm to approximate \#$3$-SAT to within a factor of $1+\epsilon$ in time $2^{o(n)}/\epsilon^2$ (taking $\epsilon > 0$ as part of the input).
\end{abstract}

\maketitle

	\section{Introduction}\label{sec:intro}
	
	It is clearly at least as hard to count objects as it is to decide their existence, and often it is harder.
  For a concrete example, there is a polynomial-time algorithm to find a perfect matching in a bipartite graph if one exists, but computing the exact number of all perfect matchings is a $\sharpP$-complete problem~\cite{ValPerm}, which means that solving it in polynomial time would collapse the polynomial-time hierarchy~\cite{Toda}.
  	However, the situation changes substantially if we consider approximate rather than exact counting. For all real~$\epsilon$ with $0 < \epsilon < 1$, we say that $x \in \R$ is an \emph{$\epsilon$-approximation} to $N\in\R$ if $|x-N| \le \epsilon N$ holds.
  Since the approximation guarantee is multiplicative, computing an $\epsilon$-approximation to $N$ is at least as hard as deciding whether $N > 0$ holds. In fact, these two tasks are often roughly equally hard, and indeed this is true for our example:
  Jerrum, Sinclair, and Vigoda~\cite{DBLP:journals/jacm/JerrumSV04} proved that an $\epsilon$-approximation to the number of perfect matchings in a bipartite graph can be computed in polynomial time.
  While there is a polynomial-time algorithm to find perfect matchings in bipartite graphs and one to approximately count them, there is still an important discrepancy: The former algorithm runs in quasi-linear time while the latter runs in time~$\epsilon^{-2}\cdot\tilde{O}(n^{10})$.
  
  This paper is concerned with fine-grained complexity, in which one considers the exact running time of an algorithm rather than broad categories such as polynomial time, FPT time, or subexponential time.   
  Reductions that solve an approximate counting problem by means of an oracle for its decision version have been studied already in various different contexts.
  Sipser~\cite{Sipser} and Stockmeyer~\cite{Stockmeyer} proved implicitly that every problem in $\sharpP$ has a polynomial-time randomised $\epsilon$-approximation algorithm that has access to an $\NP$-oracle; the result is later made explicit by Valiant and Vazirani~\cite{VV}.
  In parameterised complexity, M\"{u}ller~\cite{MullerCounting} proved an analogue of this result for the W-hierarchy: In particular, for every problem in \#W[1], there is a randomised algorithm that has access to some W[1]-oracle, runs in time $f(k)\cdot\poly(n,\epsilon^{-1})$ for some computable $f:\N\rightarrow\N$, and outputs an $\epsilon$-approximation to the problem.
  Finally, in the exponential-time setting, Thurley~\cite{DBLP:conf/stacs/Thurley12} proposed a reduction for $k$-SAT that implies: If there is an $\pf{O}(2^{(1-\delta)n})$-time algorithm for $k$-SAT for some~${\delta > 0}$, then there is an $\epsilon$-approximation algorithm for \#$k$-SAT that runs in time
  $\epsilon^{-2}\cdot\pf{O}(2^{(1-\delta/2)n})$. (This reduction was later improved by Schmitt and Wanka~\cite{wanka-ksat}.) Such results are an important foundation of the wider complexity theory of approximate counting initiated by Dyer, Goldberg, Greenhill and Jerrum~\cite{dggj-approx}. However, all of these reductions introduce significant overheads to the running time --- they are not fine-grained. 
  
  Perhaps the most important polynomial-time problems in fine-grained complexity are orthogonal vectors (OV), 3SUM, and all-pairs shortest paths (APSP). All three problems admit well-studied notions of hardness, in the sense that many problems reduce to them or are equivalent to them under fine-grained reductions, and they are not known to reduce to one another. See Vassilevska~Williams~\cite{Williams-3sum} for a recent survey. It is not clear what a ``canonical'' counting version of APSP should be, but it is equivalent to the Negative-Weight Triangle problem (NWT) under subcubic reductions~\cite{WW-NWT}, so we consider this instead. We give highly efficient fine-grained reductions from approximate counting to decision for all three problems. All of these results are immediate corollaries of an algorithm which counts edges in a bipartite graph to which it has limited oracle access; this algorithm has several additional applications, including some new approximate counting algorithms for related problems. We discuss our edge-counting framework further in Section~\ref{sec:framework}, and describe its applications in Section~\ref{sec:introP} together with a detailed overview of the~literature.
  
  The most important exponential-time problem in fine-grained complexity is unequivocally SAT. We provide a fine-grained reduction from approximate \#$k$-SAT to $O(k\log^2 k)$-SAT as $k\to\infty$; as a corollary, we show that if the Strong Exponential Time Hypothesis (SETH) is false, then the savings from decision $k$-SAT as $k\to\infty$ can be passed on to approximate \#$k$-SAT with subexponential overhead. Our reduction also implies that the Exponential Time Hypothesis (ETH) is equivalent to an approximate counting version. We discuss the reduction and its corollaries further in Section~\ref{sec:sat-intro}.

  \subsection{Approximately Counting Edges in Bipartite Graphs}\label{sec:framework}
  
  Let $G$ be a bipartite graph with~$G=(U,V,E)$.
	We consider a computation model where the algorithm is given~$U$ and~$V$, and can access the edges of the graph only via its adjacency oracle and its independence oracle:
  \begin{itemize}
    \item
      The \emph{adjacency oracle} of $G$ is the function $\adjo{G}:U\times V\rightarrow\{0,1\}$ such that $\adjo{G}(u,v) = 1$ if and only if $(u,v) \in E$.
    \item
      The \emph{independence oracle} of $G$ is the function $\indo{G}:2^{U\cup V} \rightarrow \{0,1\}$ such that $\indo{G}(S) = 1$ if and only if $S$ is an independent set in $G$.
  \end{itemize}
  Of course, the adjacency oracle can be simulated with the independence oracle by querying sets of two vertices.
  We distinguish them here, because we wish to think of independence queries as very expensive, and we will use them only polylogarithmically often.
  Our main result is as follows:
	\def\statefinegrain{
	There is a randomised algorithm $\mathcal{A}$ which, given a rational number $\epsilon$ with $0 < \epsilon < 1$ and oracle access to an $n$-vertex bipartite graph $G$, outputs an $\epsilon$-approximation of $|E(G)|$ with probability at least $2/3$. Moreover,~$\mathcal{A}$ runs in time $\epsilon^{-2}\cdot O(n\log^4 n\log\log n)$ and makes at most ${\epsilon^{-2}\cdot O(\log^5 n\log\log n)}$ calls to the independence oracle.
	}
	\begin{thm}\label{thm:finegrain}
		\statefinegrain{}
	\end{thm}
	We prove this result in Section~\ref{sec:finegrain}.
  Note that since oracle calls are constant time operations, the adjacency oracle is called at most $\epsilon^{-2}\cdot\tilde{O}(n)$ times. Moreover, a polynomial factor of~$\epsilon^{-1}$ in the running time is to be expected, since the exact value of $|E(G)|$ can be recovered by taking~$\epsilon = 1/(2n^2)$.
  
  In independent work, Beame et al.~\cite{new-oracle} obtain a result similar to Theorem~\ref{thm:finegrain}, with an overall running time of $\epsilon^{-4}\cdot\tilde{O}(1)$ but with no further bound on the number of independence queries used. Thus their result outperforms Theorem~\ref{thm:finegrain} when independence queries are fast, and underperforms when they are slow. In all our applications, independence queries are so slow as to dominate our running times; thus substituting Beame et al.'s result for Theorem~\ref{thm:finegrain} would yield worse algorithms.
  
  While Theorem~\ref{thm:finegrain} is not able to deal with the non-bipartite case at all, Beame et al.~\cite{new-oracle} present a second algorithm in their paper that is able to approximately count edges in general graphs by using $\tilde{O}(n^{2/3})$ queries to the independence oracle.
  Recently, Chen, Levi, and Waingarten~\cite{DBLP:conf/soda/0001LW20} improve the number of queries to $\tilde{O}(\sqrt{n})$ and prove unconditionally that this is optimal for general graphs.
  
  In work subsequent to this paper, using a different technique, the authors and Meeks~\cite{DBLP:conf/soda/DellLM20} were able to remove the adjacency queries from Theorem~\ref{thm:finegrain} while retaining a bound of $\epsilon^{-2} \log^{O(1)} n$ for the number of independence queries. Moreover, they generalise the theorem to $k$-partite $k$-uniform hypergraphs, where the bound on the number of queries is at most $\epsilon^{-2} \log^{O(k)} n$, and extend the result to cover approximately-uniform sampling. This generalisation has consequences for the fine-grained complexity of problems that do not directly correspond to bipartite graphs, such as approximately counting graph motifs.
  Independently, Bhattacharya et al.~\cite{DBLP:journals/corr/abs-1808-00691,DBLP:journals/corr/abs-1908-04196} generalise Theorem~\ref{thm:finegrain} to $k$-partite $k$-uniform hypergraphs, obtaining a somewhat weaker bound of $\epsilon^{-4} \log^{O(k)} n$ on the number of queries.
  Moreover, Bishnu et al.~\cite{DBLP:conf/isaac/BishnuGKM018} use the generalised oracle to solve various decision problems in parameterised complexity.

  \subsection{Corollaries for Problems in P}\label{sec:introP}
  As described in Section~\ref{sec:intro}, the problems Orthogonal Vectors (\dov), \dtsum, and Negative-Weight Triangle (NWT) are central players in the field of fine-grained complexity. All three problems have simple polynomial-time exhaustive-search algorithms over which it is conjectured that no truly polynomial improvement is possible.
  The same exhaustive search algorithms also solve the canonical counting versions of these problems.
  Nevertheless it is possible that the decision version has faster algorithms while the exact counting version does not. 
  Our results imply that any improvement to decision algorithms transfers to the approximate counting version of the problem as well, up to polylogarithmic factors in the running time.
  
  In fact, for OV~\cite{AWY15} and~NWT~\cite{Williams-APSP}, non-trivial (subpolynomial) improvements over exhaustive search algorithms are already known.
  Our results transfer these improvements to approximate counting.
  In the case of the standard version of OV, this turns out to be uninteresting as the derandomisation of~\cite{AWY15} due to Chan and Williams~\cite{CW-Counting} already solves the \emph{exact} counting version.
  However in the case of NWT, we are not aware of improved algorithms for the counting version; using our reduction, we obtain such an algorithm for approximate counting. Our reductions also apply to several variants of the three central problems, yielding more new algorithms. Notably, for one variant of OV we obtain a quasilinear-time approximate counting algorithm, but all exact counting algorithms require quadratic time under SETH.
  In the following, we state our results~formally.

  \subsubsection{OV}

  In the orthogonal vectors problem OV, we are given two lists $A$ and $B$ of zero-one vectors over~$\R^d$, and must determine whether there exists an orthogonal pair $(a,b) \in A\times B$. In \#OV, we must instead determine the number of orthogonal pairs. Writing $n=|A|+|B|$, it is easy to see that \dov\ and \#OV can both be solved in $O(n^2d)$ operations by iterating over all pairs.
  The \emph{low-dimension OV conjecture}~\cite{Williams-OVSETH,DBLP:conf/soda/GaoIKW17} asserts that in the case where ${d=\omega(\log n)}$, there is no randomised algorithm that solves \dov\ in time $O(n^{2-\delta})$, for any constant $\delta>0$.
  This conjecture is implied by the Strong Exponential Time Hypothesis (SETH)~\cite{Williams-OVSETH}, and Abboud, Williams, and Yu~\cite{AWY15} proved that it fails when $d=O(\log n)$.

  To reduce the approximate version of \#OV to OV, we model the instance as a bipartite graph and apply the edge estimation algorithm from Theorem~\ref{thm:finegrain}.
  Indeed, the list $A$ becomes the left side of the graph, $B$ the right side, and each orthogonal pair $(a,b)$ becomes an edge.
  Then approximately counting orthogonal pairs reduces to estimating the number of edges in this graph, adjacency queries take time $O(d)$ and correspond to computing the inner product of two vectors, and independence queries are simulated by invoking the assumed decision algorithm for OV.
  In this way, in Section~\ref{sec:ov-proofs} we obtain the following structural complexity result as a corollary to Theorem~\ref{thm:finegrain}.
  \newcommand{\stateov}{
    If $\dov$ with~$n$ vectors in $d$ dimensions has a randomised algorithm that runs in time $T(n,d)$, then there is a randomised $\epsilon$-approximation algorithm for~$\cov$ that runs in time $T(n,d)\cdot \epsilon^{-2}O(\log^6 n\log\log n)$.
  }
  \begin{thm}\label{thm:ov}
    \stateov{}
  \end{thm}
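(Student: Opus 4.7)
The plan is to frame the $\cov$ instance as a bipartite edge-counting problem and feed it into the paper's general approximation framework, using the assumed $\dov$ algorithm as the underlying edge-existence oracle. Let $G$ be the bipartite graph on vertex classes $A$ and $B$ whose edges are the orthogonal pairs $(a,b)\in A\times B$, so that $\cov$ on $(A,B)$ equals $|E(G)|$. For any $A'\subseteq A$ and $B'\subseteq B$, the $\dov$ sub-instance $(A',B')$ has a witness if and only if the induced subgraph $G[A'\cup B']$ has an edge, so running the assumed $T(n,d)$-time decision algorithm on $(A',B')$ realises an edge-existence oracle on arbitrary induced subgraphs of $G$, at cost at most $T(n,d)$ per query (after padding, if necessary, to enforce monotonicity of $T$ in $n$).

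With this oracle in place, I would invoke the paper's general approximate bipartite edge-counting procedure, which in spirit is Stockmeyer-style. One draws $O(\log n)$-wise independent hash functions $h_A:A\to[2^{k_A}]$ and $h_B:B\to[2^{k_B}]$, queries the oracle on $G[h_A^{-1}(0)\cup h_B^{-1}(0)]$, and observes that each edge of $G$ survives with probability $2^{-(k_A+k_B)}$. Binary-searching over $k_A+k_B$ for the level at which the expected number of surviving edges is $\Theta(1)$, and then averaging $O(\epsilon^{-2})$ independent trials to estimate the yes-probability $p\approx 1-\exp(-|E(G)|/2^{k_A+k_B})$, yields an $\epsilon$-approximation of $|E(G)|$ via $|E(G)|=2^{k_A+k_B}(-\ln(1-p))$. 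The factor $\epsilon^{-2}T(n,d)\cdot O((\log n)^6)$ in the running time absorbs the trials, the binary search, median-of-means boosting to high success probability, and the overhead of sampling sufficiently independent hash functions.

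The main technical obstacle, which I expect determines the additive $\epsilon^{-4}d\cdot\tilde O(n)$ term, is variance control. If $G$ contains a vertex whose degree is a non-negligible fraction of $|E(G)|$, then the surviving-edge indicators across hash samples are strongly correlated through that vertex and the Poisson-style concentration breaks down. To handle this, my plan is to first use the oracle to identify all vertices of degree exceeding a threshold of order $\epsilon^4|E(G)|/(\log n)^{O(1)}$, and to count their neighbourhoods exactly by $O(nd)$-time scans of the OV input; since there are $\tilde O(\epsilon^{-4})$ such heavy vertices, this yields exactly the additive $\epsilon^{-4}d\cdot\tilde O(n)$ term. Once heavy vertices are removed, the residual graph has sufficiently bounded maximum degree that the variance of the Stockmeyer estimator is small enough for Chebyshev's inequality to deliver the desired $\epsilon$-approximation on the remainder using the $O(\epsilon^{-2})$ oracle-driven trials above.
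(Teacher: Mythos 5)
Your reduction to the abstract oracle setting is right and matches the paper: take $G=(A,B,E)$ with $E$ the orthogonal pairs, an adjacency query is an $O(d)$ inner product, and an independence query on $X\subseteq A\cup B$ is a run of the assumed $\dov$ algorithm on the sub-instance $(A\cap X, B\cap X)$. The paper then simply invokes its general Theorem~\ref{thm:finegrain} and is done.

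The internal algorithm you sketch, however, is quite different from the one the paper supplies for Theorem~\ref{thm:finegrain}, and as written it has a gap. You propose to hash $A$ and $B$ independently, drop the level $k_A+k_B$ until the expected number of surviving edges is $\Theta(1)$, estimate the \emph{existence probability} $p$ by repeated decision queries, and recover $e(G)$ by inverting $p\approx 1-\exp\bigl(-e(G)/2^{k_A+k_B}\bigr)$. But with vertex-level hashing this Poisson formula is not a function of $e(G)$ alone. Take $G=K_{m,m}$, so $e(G)=m^2$. A surviving edge exists iff at least one $A$-vertex and at least one $B$-vertex survive, so with $k_A=k_B=\lg m$ (which gives $\mu=1$) one has $p = (1-(1-1/m)^m)^2 \to (1-e^{-1})^2 \approx 0.40$, whereas the Poisson formula predicts $1-e^{-1}\approx 0.63$; inverting it produces an estimate off by a constant factor. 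Crucially, every vertex of $K_{m,m}$ has degree $m=\sqrt{e(G)}$, which for large $m$ is far \emph{below} your removal threshold $\epsilon^4 e(G)/(\log n)^{O(1)}$, so the heavy-vertex preprocessing you propose removes nothing on this instance. Conversely, tightening the threshold to the $\sqrt{e(G)}$ scale needed for the Poisson limit can leave $\Omega(\sqrt{e(G)})$ vertices to scan at $O(nd)$ each, which is $\Omega(n^2 d)$ when $e(G)=\Theta(n^2)$ and blows the budget. You also do not say how the heavy vertices would be identified from the oracle alone; this is itself nontrivial.

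The paper avoids all of this by never invoking a Poisson limit. Its algorithm \counter{} repeatedly halves a set $X\subseteq V$ and tracks a running estimate $2^{t}\eb{X}+N$. Before each halving it runs \findcore{}, which uses a random ordering of $U$ and $O(\log n)$ independence queries per element to locate a small \emph{$\xi$-core} of potentially heavy vertices relative to the shrinking support $U_X$ (not relative to $e(G)$). If the core is small (an ``unbalancer''), those vertices' edges are counted exactly via $O(n)$ adjacency queries each and added to $N$; otherwise $X$ is provably $\xi$-balanced (Lemma~\ref{lem:cores-work}) and McDiarmid's inequality (Lemma~\ref{lem:balanced-halve}) gives a per-step multiplicative error $1\pm O(\sqrt{\xi\log n})$ that compounds benignly over the $O(\log n)$ halvings. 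This iterative, structure-aware halving is what delivers the stated $\epsilon^{-2}\cdot O((\log n)^6)$ independence queries and $\epsilon^{-4}d\cdot\tilde O(n)$ side cost; a one-shot hash-and-invert scheme does not.
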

  
  In particular, if $\epsilon^{-1}$ is at most polylogarithmic in~$n$, Theorem~\ref{thm:ov} implies we can \mbox{$\epsilon$-approximate} \#OV with only polylogarithmic overhead over decision.
  
  While OV has a non-trivial algorithm~\cite{AWY15} with running time $n^{2-1/O(\log(d/\log n))}$ as we mentioned in Section~\ref{sec:introP}, it has already been adapted into an exact \#OV algorithm with the same running time~\cite{CW-Counting}, so Theorem~\ref{thm:ov} does not yield a new algorithm at the moment. However, any further improvement for the decision version of the problem will immediately translate to a new approximate counting algorithm.

  Interestingly, there is a variant of \dov\ for which our method does yield a new algorithm; in this variant, the real zero-one vectors are replaced by arbitrary vectors over a finite field or over the integers modulo $m$.
  Even though Williams and Yu~\cite{WY-OV-algo} did not consider the counting version and their algorithms do not seem to generalise to counting, we can nevertheless use their decision algorithm as a black box to obtain an efficient approximate counting algorithm as a corollary to Theorem~\ref{thm:finegrain}. 
	
	\newcommand{\stateovalgo}{
		Let $m=p^k$ be a constant prime power. There is a randomised $\epsilon$-approxi\-mation algorithm for \cov\ over $\GF{m}^d$ with running time $\epsilon^{-2}d^{(p-1)k}\cdot\tilde{O}(n)$, and for \cov\ 
    over $(\Z/m\Z)^d$ with running time $\epsilon^{-2}d^{m-1}\cdot\tilde{O}(n)$.
	}
	\begin{thm}\label{thm:ov-algo}
		\stateovalgo{}
	\end{thm}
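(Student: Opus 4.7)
The plan is to combine the Williams--Yu decision algorithms for $\dov$ over $\GF{m}^d$ and $(\Z/m\Z)^d$~\cite{WY-OV-algo} with an analogue of Theorem~\ref{thm:ov} valid over these rings. The Williams--Yu algorithms run in time $T(n,d)=d^{(p-1)k}\cdot\tilde O(n)$ over $\GF{m}^d$ and in time $T(n,d)=d^{m-1}\cdot\tilde O(n)$ over $(\Z/m\Z)^d$. Granting the analogue of Theorem~\ref{thm:ov}, the desired bounds follow by direct substitution: the $O((\log n)^6)$ factor from the reduction is absorbed into $\tilde O(\cdot)$, and the secondary term $\epsilon^{-4}\cdot d\cdot\tilde O(n)$ is dominated by $\epsilon^{-4}\cdot d^{(p-1)k}\cdot\tilde O(n)$ (respectively $\epsilon^{-4}\cdot d^{m-1}\cdot\tilde O(n)$), since $m=p^k\ge 2$ forces $(p-1)k\ge 1$ and $m-1\ge 1$.

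The substantive task is therefore to establish the analogue of Theorem~\ref{thm:ov} over $\GF{m}^d$ and $(\Z/m\Z)^d$. Following the bipartite-graph framework advertised in Section~\ref{sec:intro-polytime}, I would re-read the proof of Theorem~\ref{thm:ov} as a generic edge-approximation procedure: the vertices are $A\dotcup B$, the edges are the orthogonal pairs $(a,b)$ under the appropriate pairing, and the procedure accesses the graph only through a decision oracle applied to sub-lists $A'\subseteq A$ and $B'\subseteq B$. Each of the $O((\log n)^6)$ oracle calls is then a valid input for the Williams--Yu algorithm over the same ring/field as the original instance, so one can use that algorithm as the oracle without modification. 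The remaining $\epsilon^{-4}\cdot d\cdot\tilde O(n)$ overhead arises from random sampling and estimation and is oblivious to the ring structure.

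The main obstacle is verifying that the proof of Theorem~\ref{thm:ov} really is generic in this sense, i.e.\ that every sub-instance it generates is a \emph{subset} of the original vector lists and that no auxiliary vectors are introduced which would move outside $\GF{m}^d$ or $(\Z/m\Z)^d$. If some intermediate step instead requires manipulating vectors entrywise in a way that depends on the ring (for instance, in the hashing subroutine used to bucket vectors), one would need to replace that step by a ring-agnostic variant; however, since orthogonality is preserved by restricting to sublists and the hashing used is based on sampling rather than on linear algebra over $\R$, such adaptations should be mild. Once this genericity check is completed, the theorem follows immediately by the arithmetic above.
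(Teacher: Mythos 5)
Your proposal is correct and matches the paper's proof: the paper likewise observes that the reduction of Theorem~\ref{thm:ov} is ring-agnostic (since Theorem~\ref{thm:finegrain} is a purely graph-theoretic edge-estimation procedure whose sub-instances are sublists of $A$ and $B$, with orthogonality checks and decision calls as the only ring-dependent primitives), and then substitutes the Williams--Yu running times $T(n,d)=d^{(p-1)k}\cdot\tilde O(n)$ and $T(n,d)=d^{m-1}\cdot\tilde O(n)$. Your concern about a ring-dependent ``hashing'' step is moot, since Theorem~\ref{thm:finegrain} uses random vertex-halving rather than hashing, exactly as you ultimately conclude.
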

	
  If $\epsilon^{-1}$ and $d$ are at most polylogarithmic in~$n$, and $m$ is constant, these algorithms run in quasilinear time. Note that under SETH, any exact counting algorithm for \cov\ over $(\Z/m\Z)^d$ requires time $\Omega(n^{2-o(1)})$~\cite{williams-ov-hard}; we have therefore proved a separation between approximate and exact counting. (As an aside, this implies that the factor of $\epsilon^{-2}$ in the running time of Theorem~\ref{thm:finegrain} cannot be dropped to $\epsilon^{-1/2+o(1)}$ under SETH.)
  Williams and Yu showed that their algorithm's dependence on~$d$ is close to best possible under SETH, and this hardness result of course applies to approximate counting as well. 
    
	\subsubsection{3SUM}\label{sec:tsum-intro}

  In the \dtsum\ problem, we are given three integer lists $A$, $B$, and $C$ of total length $n$ and must decide whether there exists a tuple $(a,b,c) \in A\times B\times C$ with $a+b=c$.
  One popular extension is 3SUM+, due to Vassilevska~Williams and Williams~\cite{WW-NWT}, which asks for \dtsum\ to be solved for all inputs $(A,B,c)$ with $c \in C$. However, as we are specifically concerned with counting problems, we instead consider the problem \ctsum, where we must compute the total number of solution tuples $(a,b,c)$.
    
  It is easy to see that \dtsum\ and \ctsum\ can be solved in $\tilde{O}(n^2)$ operations by sorting~$C$ and iterating over all pairs in $A \times B$, and it is conjectured~\cite{GO-3sum,Patrascu-3sum} that \dtsum\ admits no $O(n^{2-\delta})$-time randomised algorithm for any constant $\delta>0$.
  This approach is also how we model instances of \dtsum\ as a bipartite graph in order to do approximate counting. Joining two vertices $a$ and $b$ whenever $a+b \in C$, adjacency queries can be answered efficiently by binary search on the now-sorted list~$C$, and independence queries on a set $S\subseteq A\cup B$ can be answered by the assumed decision algorithm.
  Analogous to Theorem~\ref{thm:ov}, in Section~\ref{sec:3sum-proofs} we obtain the following structural result for \dtsum\ as a corollary to Theorem~\ref{thm:finegrain}.

	\newcommand{\statetsum}{
		If $\dtsum$ with~$n$ integers has a randomised algorithm that runs in time $T(n)$, then there is a randomised $\epsilon$-approximation algorithm for $\ctsum$ that runs in time $T(n)\cdot \epsilon^{-2}O(\log^6 n\log\log n)$.
	}
	\begin{thm}\label{thm:tsum}
		\statetsum{}
	\end{thm}
	
	Thus if $\epsilon^{-1}$ is at most polylogarithmic in~$n$, then the approximate counting algorithm in Theorem~\ref{thm:tsum} has only polylogarithmic overhead over decision. Independently of whether or not the \dtsum\ conjecture is true, we conclude that \dtsum\ and, say, $\frac 12$-approximating \ctsum\ have the same time complexity up to polylogarithmic factors.
	
	The fastest-known algorithm for 3SUM, due to Baran, Demaine and P\v{a}tra\c{s}cu~\cite{Baran2008}, has running time $O(n^2(\log\log n/\log n)^2)$. Theorem~\ref{thm:tsum} does not currently yield improved algorithms for approximating \ctsum\ as the polylogarithmic speedup factor of~$o(\log^3 n)$ over exhaustive search is smaller than the $O(\log^6 n\log\log n)$ cost in our reduction. However, Chan and Lewenstein~\cite{CL-3SUM} prove that \dtsum\ has much faster algorithms when the input is restricted to instances in which elements of one list are somewhat clustered, in a sense made explicit below. (Their algorithm also works for 3SUM+, but not for \ctsum\ as far as we can tell.) This is an interesting special case with several applications, including monotone multi-dimensional \dtsum\ with linearly-bounded coordinates --- see the introduction of~\cite{CL-3SUM} for an overview. 
  Thus by using the algorithm of Chan and Lewenstein as a black box for the independence oracle, we obtain the following algorithm as a corollary to Theorem~\ref{thm:finegrain}.
	
	\newcommand{\statetsumalgo}{
		For all $\delta > 0$, there is a randomised $\epsilon$-approximation algorithm with running time $\epsilon^{-2}\cdot \tilde{O}(n^{2-\delta/7})$ for instances of \ctsum\ with $n$ integers such that at least one of $A$, $B$, or $C$ may be covered by $n^{1-\delta}$ intervals of length $n$.
	}
	\begin{thm}\label{thm:tsum-algo}
		\statetsumalgo{}
	\end{thm}
	
  \subsubsection{NWT}
  In the Negative-Weight-Triangle problem, we are given an edge-weighted graph and must decide whether the graph contains a triangle of negative total weight. Vassilevska~Williams and Williams~\cite{WW-NWT} prove that NWT is equivalent to APSP under subcubic reductions. An $n$-vertex instance of \dnwt\ and its natural counting version \cnwt\ can be solved in time $O(n^3)$ by exhaustively checking every possible triangle, and it is conjectured~\cite{WW-NWT} that \dnwt\ admits no $O(n^{3-\delta})$-time randomised algorithm for any constant~$\delta>0$.
  
  To reduce approximate \#NWT to its decision version \dnwt, we put all vertices on one side of the bipartite graph and all edges on the other side. Then adjacency queries correspond to testing whether a given vertex and edge together form a triangle of negative weight, and independence queries can be answered by the assumed decision algorithm for~\dnwt.
  Thus in Section~\ref{sec:nwt-proofs} we obtain the following structural result for NWT as a corollary to Theorem~\ref{thm:finegrain}.
	
	\newcommand{\statenwt}{
		If $\dnwt$ for $n$-vertex graphs has a randomised algorithm that runs in time $T(n)$, then there is a randomised
    $\epsilon$-approximation algorithm for $\cnwt$ that runs in time \[T(n)\cdot\epsilon^{-2}O(\log^6 n\log\log n)\,.\]
	}
	\begin{thm}\label{thm:nwt}
		\statenwt{}
	\end{thm}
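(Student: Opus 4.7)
The plan is to express $\cnwt$ as approximately counting edges of a bipartite graph to which the assumed $\dnwt$ algorithm gives us rectangular decision-oracle access, and then invoke the general edge-counting engine that drives Theorems~\ref{thm:tsum} and~\ref{thm:ov}.

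First I would set up the bipartite witness graph. Assuming the input tripartite graph has vertex parts $V_1,V_2,V_3$, let $A = V_1\times V_2$ and $B = V_3$, and put an edge between $(u,v)\in A$ and $z\in B$ exactly when $w(uv)+w(vz)+w(uz) < 0$. Then $\abs{E(H)}$ is exactly the number of ordered negative triangles, so a multiplicative $\epsilon$-approximation to $\abs{E(H)}$ yields one to $\cnwt$. For any $A_1\subseteq V_1$, $A_2\subseteq V_2$, and $B'\subseteq V_3$, deciding whether $H$ has an edge between $A_1\times A_2$ and $B'$ is exactly an instance of $\dnwt$ on the tripartite subgraph induced by $A_1\cup A_2\cup B'$, solvable in time $T(\abs{A_1}+\abs{A_2}+\abs{B'},W)\le T(n,W)$ by hypothesis. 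This gives exactly the "rectangular" oracle access the general framework requires, provided the framework only ever hashes $V_1$ and $V_2$ independently so that its hash-buckets on the $A$-side remain products and hence remain queryable.

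I would then plug $H$ into the general edge-counting framework used already for Theorems~\ref{thm:tsum} and~\ref{thm:ov}. Schematically this uses pairwise-independent hashing in the style of Valiant--Vazirani to isolate witness edges at geometrically spaced target scales, together with a Stockmeyer-style bit-by-bit estimator and standard variance reduction. This yields $O(\epsilon^{-2}(\log n)^6)$ oracle calls of cost at most $T(n,W)$ each, contributing the $\epsilon^{-2}T(n,W)\cdot O((\log n)^6)$ term. The additive $\epsilon^{-4}\log W\cdot\tilde O(n^2)$ term accounts for the low-level work: at sufficiently small scales the framework enumerates the encoded incidences of $H$ directly rather than querying the oracle, and the "edge representation" of the \dnwt\ witness has $O(n^2)$ weighted edges of bitlength $O(\log W)$ which must be processed a total of $O(\epsilon^{-4})$ times. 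This is structurally identical to the $\log N\cdot \tilde O(n)$ and $d\cdot\tilde O(n)$ terms appearing in Theorems~\ref{thm:tsum} and~\ref{thm:ov}, with $n^2$ in place of $n$ because the witness coordinates are triples rather than pairs.

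The main obstacle is to confirm that the rectangular structure of the oracle is really all the framework uses. Concretely, one must verify that every hash family applied to $A=V_1\times V_2$ factors as independent hashes of $V_1$ and $V_2$ (so that buckets are products and remain queryable by $\dnwt$ on an induced tripartite graph), and that the variance analysis of the estimator continues to go through in this product setting. Once this is checked, the analysis parallels the analogous proofs for $\ctsum$ and $\cov$ essentially line for line, with no new probabilistic input required beyond replacing $n$ by $n^2$ in the additive preprocessing term.
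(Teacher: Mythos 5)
Your bipartite decomposition does not match the oracle access that Theorem~\ref{thm:finegrain} actually uses, and that mismatch is a genuine gap. You take the two sides of the bipartite graph to be $A=V_1\times V_2$ and $B=V_3$, which means you can only answer independence queries for sets whose $A$-part is a product $A_1\times A_2$. But the algorithm behind Theorem~\ref{thm:finegrain} (the \counter{}/\findcore{} routines) does \emph{not} restrict itself to product sets: it repeatedly replaces $X\subseteq V$ by a uniformly random half of $X$, and \findcore{} probes $\indo{G}$ on sets determined by prefixes of a uniformly random ordering of $U$. Neither operation preserves the product structure of $V_1\times V_2$, so the "provided the framework only ever hashes $V_1$ and $V_2$ independently" caveat you flag is in fact false for the framework as it stands. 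To make your construction work you would have to redesign the halving and core-finding steps so that they only ever produce product sets and then redo the concentration analysis (Lemmas~\ref{lem:balanced-halve}, \ref{lem:find-core}, \ref{lem:cores-work}) in that constrained setting; that is not a line-for-line transcription of the $\ctsum$/$\cov$ proofs, and you do not carry it out.

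The paper avoids the issue by choosing a different bipartition: it sets $U=A$ (a single vertex class) and $V$ equal to the set of edges of $G$ inside $B\cup C$, with $(a,\{b,c\})$ an edge of the auxiliary graph $G'$ exactly when $abc$ is a negative triangle. The crucial payoff is that for an \emph{arbitrary} $X\subseteq U\cup V$ one can build a tripartite graph $G_X$ on $(X\cap A)\cup B\cup C$ — keeping all edges out of $X\cap A$ and only those $B$--$C$ edges lying in $X\cap V$ — so that $G_X$ has a negative triangle iff $X$ is not independent in $G'$. This gives a full independence oracle with a single $\dnwt$ call on at most $n$ vertices, with no product-structure requirement on $X$, and $|V(G')|=O(n^2)$ then directly yields the $\epsilon^{-4}\log W\cdot\tilde O(n^2)$ additive term. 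If you want to salvage your route, the cleanest fix is to switch to this $(A,\,E(G)\cap\binom{B\cup C}{2})$ bipartition rather than to try to force product structure through the edge-counting engine.
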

	
	Thus if~$\epsilon^{-1}$ is at most polylogarithmic, our algorithm has only polylogarithmic overhead over decision.
  It is known~\cite{WW-NWT} that a truly subcubic algorithm for NWT implies that the negative-weight triangles can also be enumerated in subcubic time. While an enumeration algorithm is obviously stronger than an approximate counting algorithm, this reduction has polynomial overhead and so does not imply Theorem~\ref{thm:nwt}.
  
  Williams~\cite{Williams-APSP} gives an algorithm with subpolynomial improvements over the exhaustive search algorithm.
  Using this algorithm as a black-box to answer independence queries, we obtain the following algorithm as a corollary to Theorem~\ref{thm:finegrain}.
	\newcommand{\statenwtalgo}{
		There is a randomised $\epsilon$-approximation algorithm for \cnwt\ which runs in time $\epsilon^{-2}n^3/e^{\Omega(\sqrt{\log n})}$ on graphs with $n$ vertices and polynomially bounded edge-weights.
	}
	\begin{thm}\label{thm:nwt-algo}
		\statenwtalgo{}
	\end{thm}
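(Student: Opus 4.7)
The plan is to apply Theorem~\ref{thm:nwt} directly, using the decision algorithm of Williams~\cite{Williams-APSP} for APSP as the subroutine. First, I would invoke Williams' result to obtain an algorithm for APSP on $n$-vertex graphs with polynomially bounded integer weights in time $n^3/e^{\Omega(\sqrt{\log n})}$. Since APSP and \dnwt\ are equivalent under subcubic reductions by Williams and Williams~\cite{WW-NWT}, this translates into a randomised decision algorithm for \dnwt\ with running time $T(n,W) = n^3/e^{\Omega(\sqrt{\log n})}$ whenever $W = n^{O(1)}$, where the polynomial blow-up from the subcubic reduction is absorbed into the denominator (replacing the $\Omega$-constant by a smaller one).

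Next, I would substitute this bound for $T(n,W)$ into Theorem~\ref{thm:nwt}. The resulting approximate counting algorithm has running time
\[
  \epsilon^{-2}\cdot \frac{n^3}{e^{\Omega(\sqrt{\log n})}}\cdot O((\log n)^6) \;+\; \epsilon^{-4}\log W\cdot \tilde O(n^2).
\]
Because $\sqrt{\log n}$ grows asymptotically faster than $\log\log n$, the factor $(\log n)^6 = e^{O(\log\log n)}$ in the first term can be absorbed into the $e^{\Omega(\sqrt{\log n})}$ denominator by slightly shrinking the hidden constant. With weights polynomially bounded we have $\log W = O(\log n)$, so the second term is $\epsilon^{-4}\cdot\tilde O(n^2)$, which is dominated by the first for large $n$. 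Collecting terms and using $\epsilon^{-2}\le\epsilon^{-4}$ for $\epsilon\in(0,1)$ yields the claimed bound $\epsilon^{-4}\cdot n^3/e^{\Omega(\sqrt{\log n})}$.

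This is essentially a bookkeeping exercise once Theorem~\ref{thm:nwt} is in hand; all of the real work has been done inside that reduction. There is no substantial obstacle beyond verifying that the polylogarithmic overhead incurred by the reduction is swallowed by the subpolynomial savings factor $e^{\Omega(\sqrt{\log n})}$ in Williams' decision algorithm. This is precisely the scenario that motivates the ``polylogarithmic overhead'' design of Theorem~\ref{thm:nwt} highlighted in the abstract, so the combination is clean.
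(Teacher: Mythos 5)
Your high-level plan is the same as the paper's: feed Williams' $n^3/e^{\Omega(\sqrt{\log n})}$ APSP algorithm into Theorem~\ref{thm:nwt} and do the bookkeeping, and the bookkeeping itself (absorbing $(\log n)^6$ into $e^{\Omega(\sqrt{\log n})}$, noting the second term is $\tilde O(n^2)$ and dominated) is correct.

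However, there is a genuine gap in how you turn the APSP algorithm into a \dnwt\ algorithm. You invoke the equivalence of APSP and \dnwt\ ``under subcubic reductions'' from~\cite{WW-NWT} and then assert that ``the polynomial blow-up from the subcubic reduction is absorbed into the denominator (replacing the $\Omega$-constant by a smaller one).'' That step fails. A general subcubic reduction is only guaranteed to carry an $O(n^{3-\epsilon})$ algorithm to an $O(n^{3-\delta})$ algorithm; it may incur a polynomial multiplicative blow-up of the form $n^{c}$ for some $c>0$, and a polynomial factor cannot be absorbed into a subpolynomial savings factor: $n^{c}\cdot n^3/e^{\Omega(\sqrt{\log n})}$ is $\omega\bigl(n^3/e^{\Omega(\sqrt{\log n})}\bigr)$ for every $c>0$, no matter how small you shrink the hidden constant. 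So invoking the generic subcubic equivalence is not enough here. What you actually need, and what the paper supplies explicitly, is the specific reduction from \dnwt\ to a \emph{single} call of APSP on a graph with only a constant-factor more vertices, followed by $O(n^2)$ additional work. Under that reduction the running time is $T_{\text{APSP}}(3n)+O(n^2)=27n^3/e^{\Omega(\sqrt{\log 3n})}+O(n^2)=n^3/e^{\Omega(\sqrt{\log n})}$, and \emph{that} constant factor $27$ (not a polynomial factor) is what gets absorbed by shrinking the $\Omega$-constant. With this replacement your proof goes through.
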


	\subsection{Our results for the satisfiability problem}\label{sec:sat-intro}
  
  In $k$-SAT we are given a $k$-CNF formula with~$n$ variables and must decide whether it is satisfiable.
  In the natural counting version \#$k$-SAT, we must compute the number of satisfying assignments.
  The phenomenon that decision, approximate counting, and exact counting seem to become progressively more difficult is nicely represented in the literature:
  The most efficient known $3$-SAT algorithms run in time~$O(1.308^n)$ for decision (Hertli~\cite{DBLP:journals/siamcomp/Hertli14}), in time~$O(1.515^n)$ for $\frac{1}{2}$-approximate counting (Schmitt and Wanka~\cite{wanka-ksat}), and in time~$O(1.642^n)$ for exact counting (Kutzkov~\cite{DBLP:journals/ipl/Kutzkov07}).
  
  Schmitt and Wanka's algorithm is based on an approach of Thurley~\cite{DBLP:conf/stacs/Thurley12}. They reduce approximate counting to decision in such a way that an $\pf{O}(2^{(1-\delta_k)n})$-time algorithm for $k$-SAT is turned into an $\epsilon$-approximation algorithm for \#$k$-SAT that runs in time $\epsilon^{-2}\cdot\pf{O}(2^{(1-\delta_k')n})$ for some $\delta_k/2 < \delta_k' < \delta_k$. In the most general form of their algorithm, $\delta_k'$ depends on a complicated parameterisation and is calculated on an ad hoc basis for $k=3$ and $k=4$, so no asymptotics of $\delta_k'-\delta_k$ are available; the slightly weaker form given in Section 4 of their paper satisfies $\delta_k' \to \delta_k/2$ as $k\to\infty$. Thus the exponential savings over exhaustive search go down from~$\delta_k$ for decision to roughly $\delta_k/2$ for approximate counting. For example, in the extreme case that Impagliazzo and Paturi's~\cite{IP-ETH} exponential time hypothesis (ETH) is false and $3$-SAT can be solved in time~$2^{o(n)}$, their reduction would only yield an exponential-time algorithm for \#3-SAT.
  
  Traxler~\cite{Traxler} constructs a reduction from approximate counting to decision, in which savings of~$\delta$ for decision become $\delta-o(1)$ for approximate counting, so by this metric the reduction is efficient. However, this reduction creates clauses of width $\Omega(\log n)$ and so is not suitable for $k$-SAT when $k$ is a constant.
  
  We adapt the Valiant--Vazirani style approach of Calabro, Impagliazzo, Kabanets, and Paturi~\cite{CIKP} to obtain a reduction from approximate \#$k$-SAT to $k'$-SAT, with a trade-off between keeping $k'$ close to $k$ versus keeping the cost of the reduction low. At the extremes, writing $n$ for the number of variables in the \#$k$-SAT instance, it implies a reduction from approximate \#$k$-SAT to $k$-SAT with exponential overhead $2^{O(\log^2 k/k)n}$, or a reduction from approximate \#$k$-SAT to $O(k\log^2 k)$-SAT with subexponential overhead. We formally state this reduction as Theorem~\ref{thm:growthrate} in Section~\ref{sec:CNF}.
  
  Our reduction yields interesting structural corollaries for ETH and SETH.
  Recall that SETH is false if and only if there exists some $\delta>0$ such that $k$-SAT can be solved in time $O(2^{(1-\delta)n})$ for all constants~$k$.
  Our reduction implies not only that SETH is equivalent to its approximate counting version (which is also implied by~\cite{wanka-ksat} and~\cite{DBLP:conf/stacs/Thurley12}), but also that the exponential savings~$\delta$ must be the same:
  
  \newcommand{\stateapproxseth}{
		Let $0<\delta<1$. Suppose that for all $k\in\N$, there is a randomised algorithm which runs on $n$-variable instances of \cnfsat{k} in time $O(2^{(1-\delta)n})$. Then for all $\delta' > 0$ and all $k\in\N$, there is a randomised $\epsilon$-approximation algorithm which runs on $n$-variable instances of \ccnfsat{k} in time $\epsilon^{-2}\cdot O(2^{(1-\delta+\delta')n})$.
	}
	\begin{thm}\label{thm:approx-SETH}
		\stateapproxseth{}
	\end{thm}
  
  By the sparsification lemma~\cite{IPZ-spars}, ETH is false if and only if $k$-SAT can be solved in time $O(2^{\delta n})$ for all constant~$\delta>0$ and $k$.
  Since approximate counting always implies decision, ETH clearly implies its seemingly-weaker approximate counting formulation.
  By letting~$\delta$ increase to~$1$ in Theorem~\ref{thm:approx-SETH}, we see that the converse is also true:

	\newcommand{\stateapproxeth}{
	  ETH is false if and only if, for every $k \in \N$ and $\delta > 0$, there is a randomised $\epsilon$-approximation algorithm that runs  on $n$-variable instances of \ccnfsat{k} in time~${\epsilon^{-2}\cdot O(2^{\delta n})}$.
	}
	\begin{thm}\label{thm:approx-ETH}
		\stateapproxeth{}
	\end{thm}
	
	It remains an open and interesting question whether a result analogous to Theorem~\ref{thm:approx-SETH} holds for fixed $k$, that is, whether deciding $\cnfsat{k}$ and approximating $\ccnfsat{k}$ have the same time complexity up to a subexponential factor. Even a small improvement on Theorem~\ref{thm:growthrate} would lead to new algorithms for approximate \#$k$-SAT. Indeed, for large constant $k$, the best-known decision, $\frac{1}{2}$-approximate counting, and exact counting algorithms (due to Paturi, Pudl\'{a}k, Saks, and Zane~\cite{PPSZ}, Schmitt and Wanka~\cite{wanka-ksat}, and 
	Impagliazzo, Matthews, and Paturi~\cite{DBLP:conf/soda/ImpagliazzoMP12},
	respectively) all have running time $2^{(1-\Theta(1/k))n}$, but with progressively worse constants in the exponent. 
	If our reduction from approximate \#$k$-SAT to $k$-SAT could be improved so that the exponential overhead were $2^{o(1/k)}$ instead of $2^{O((\log k)^2/k)}$, this would yield faster approximate counting algorithms for large but constant~$k$.

\subsection{Techniques}\label{sec:methods}

Our techniques for the CNF-SAT and the fine-grained results are independent from each other.

\paragraph{CNF-SAT results.}
We first discuss Theorems~\ref{thm:approx-SETH} and~\ref{thm:approx-ETH}, which we prove in Section~\ref{sec:CNF}. In the polynomial setting, the standard reduction from approximating \ccnfsat{k} to deciding \cnfsat{k} is due to Valiant and Vazirani~\cite{VV}, and runs as follows. If a $k$-CNF formula $F$ has at most $2^{\delta n}$ solutions for some~$\delta > 0$, then we use a standard branching algorithm with~$\pf{O}(2^{\delta n})$ calls to a \cnfsat{k}-oracle to prune the search tree to size $O(2^{\delta n})$. Otherwise $F$ has many solutions, and for any $m \in \N$, one may form a new formula $F_m$ by conjoining $F$ with $m$ independently-chosen uniformly random XOR clauses. It is relatively easy to see that as long as the number $\SAT{F}$ of satisfying assignments of~$F$ is substantially greater than $2^m$, then $\SAT{F_m}$ is concentrated around $2^{-m}\SAT{F}$. By choosing~$m$ appropriately, one may reduce $\SAT{F_m}$ to below $2^{\delta n}$ and thus compute $\SAT{F_m}$ exactly, then multiply it by $2^m$ to obtain an estimate for $\SAT{F}$. 

Unfortunately, this argument requires modification in the exponential setting. If $F$ has $n$ variables, then each uniformly random XOR has length $\Theta(n)$ and therefore cannot be expressed as a \mbox{$k$-CNF} formula without introducing $\Omega(n)$ new variables. It follows that (for example) $F_{\floor{n/2}}$ will contain~$\Theta(n^2)$ variables. This blowup is acceptable in a polynomial setting, but not an exponential one --- for example, given a $\Theta(2^{n^{2/3}})$-time algorithm for \cnfsat{k}, it would yield a useless $\Theta(2^{n^{4/3}})$-time randomised approximate counting algorithm for \ccnfsat{k}. We can afford to add only constant-length~XORs, which do not in general result in concentration in the number of solutions.

We therefore make use of a hashing scheme developed by Calabro, Impagliazzo, Kabanets, and Paturi~\cite{CIKP} for a related problem, that of reducing \cnfsat{k} to Unique-\cnfsat{k}. They choose a $2s$-sized subset of $[n]$ uniformly at random, where $s$ is a large constant, then choose variables independently at random within that set. This still does not yield concentration in the number of solutions of~$F_m$, but it turns out that the variance is sufficiently low that we can remedy this by summing over many slightly stronger independently-chosen hashes. 

\paragraph{Fine-grained results.}
We now sketch the proof of Theorem~\ref{thm:finegrain}, which we prove in Section~\ref{sec:finegrain}. Given a bipartite graph with $G=(U,V,E)$ and $X \subseteq V$, we write $\eb{X}$ for the number of edges incident to $X$. For all $X \subseteq V$, we may halve $\eb{X}$ in expectation simply by removing half the vertices in $X$ chosen independently at random. Moreover, if $\eb{X}$ is sufficiently small, we may use binary search to efficiently determine $\eb{X}$ exactly. Thus, as with Theorems~\ref{thm:approx-SETH} and~\ref{thm:approx-ETH}, we might hope to implement the classical approach of Valiant and Vazirani~\cite{VV}; start with $X = V$ (so that $\eb{X} = e(G)$), repeatedly approximately halve $\eb{X}$ until it is small enough to determine exactly, then multiply by the appropriate power of $2$ and output the result. 

Unfortunately, this naive algorithm may fail. For example, if the non-isolated vertices of $G$ form a star whose central vertex lies in $V$, then the new value of $\eb{X}$ is clearly not concentrated around its expectation; it is either unchanged or reduced to zero. In Lemma~\ref{lem:balanced-halve}, we show using martingale techniques that this is essentially the only way things can go wrong. We say $X$ is \emph{balanced} if no single vertex in $X$ is incident to a large proportion of the edges in $G[U\cup X]$ (see Definition~\ref{defn:balanced}), and Lemma~\ref{lem:balanced-halve} shows that if $X$ is balanced then with high probability we can approximately halve~$\eb{X}$ by deleting half of $X$ uniformly at random. 

We therefore proceed by finding a small set of vertices which ``unbalances'' $X$ if one exists, approximately counting the edges incident to them, and removing them from $X$. We repeat this process as necessary until $X$ becomes balanced, then delete half of what remains. At the end, we approximate $e(G)$ by taking an appropriate linear combination of our edge counts at each stage. However, since our access to the graph is limited, it is non-trivial to find the ``unbalancing'' vertices. We must also show that we do not remove too many vertices in this way, as finding edges by brute force is computationally expensive. Our algorithm is essentially given by \counter{} on p.~\pageref{algo:edgecount-new}, with some trivial modifications as described in the proof of Theorem~\ref{thm:finegrain}.

\section{Preliminaries}\label{sec:prelim}

\subsection{Notation}

We write $\N$ for the set of all positive integers. For a positive integer~$n$, we use $[n]$ to denote the set $\{1,\dots,n\}$. We use $\log$ or $\ln$ to denote the base-$e$ logarithm, and $\lg$ to denote the base-$2$ logarithm.

We consider graphs~$G$ to be undirected, and write $e(G) = |E(G)|$. For all $v \in V(G)$, we use $N(v)$ to denote the neighbourhood $\setc{w \in V(G)}{\set{v,w} \in E(G)}$ of $v$. 
For all $X \subseteq V(G)$, we define $\nbh{X} = \bigcup_{v \in X} N(v)$. We define $\eb{X}$ to be the size of the edge boundary of~$X$, that is, $\eb{X} = |\{e \in E(G) \mid |e \cap X| = 1\}|$.
For convenience, we shall generally present bipartite graphs~$G$ as a triple $(U,V,E)$ in which $(U,V)$ is a partition of $V(G)$ and $E \subseteq U\times V$.

When stating quantitative bounds on running times of algorithms, we assume the standard word-RAM machine model with logarithmic-sized words. We assume that lists and functions in the problem input are presented in the natural way, that is, as an array using at least one word per entry, and we assume that numerical values such as the edge weights in NWT are given in binary.
We shall write $f(x) = \tilde{O}(g(x))$ when for some constant $c \in \R$, $f(x) = O((\log x)^cg(x))$ as $x\rightarrow\infty$. Similarly, we write $f(x) = \pf{O}(g(x))$ when for some constant $c \in \R$, $f(x) = O(x^cg(x))$ as $x\rightarrow\infty$.

We require our problem inputs to be given as finite binary strings, and write $\Sigma^*$ for the set of all such strings. A \emph{randomised approximation scheme} for a function $f:\Sigma^*\rightarrow\mathbb{N}$ is a randomised algorithm that takes as input an instance $x \in \Sigma^*$ and a rational error tolerance $0 < \epsilon < 1$, and outputs a rational number $z$ (a random variable depending on the ``coin tosses'' made by the algorithm) such that, for every instance $x$, $\pr((1-\epsilon)f(x) \le z \le (1+\epsilon)f(x)) \ge 2/3$. All of our approximate counting algorithms will be randomised approximation schemes.

\subsection{Probability theory}

We use some results from probability theory, which we collate here for reference. First, we state Chebyshev's inequality.

\begin{lem}\label{lem:chebyshev}
  Let $X$ be a real-valued random variable with mean $\mu$ and let $t > 0$.
  Then
  \[
      \pushQED{\qed}
      \pr\paren[\Big]{|X - \mu| \ge t} \le \frac{\var(X)}{t^2}\,.\qedhere\popQED
  \]
\end{lem}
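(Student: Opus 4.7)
The plan is the classical textbook derivation of Chebyshev's inequality via Markov's inequality applied to the squared deviation. First I would introduce the non-negative random variable $Y = (X-\mu)^2$ and observe that, since $t > 0$ and squaring is monotone on $[0,\infty)$, the events $\set{|X-\mu| \ge t}$ and $\set{Y \ge t^2}$ coincide. So the target probability equals $\pr(Y \ge t^2)$, and the claim reduces to bounding this in terms of $\E[Y] = \var(X)$.

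Next I would establish Markov's inequality as a preliminary: for any non-negative random variable $Y$ and any $s > 0$, one has $\pr(Y \ge s) \le \E[Y]/s$. The one-line argument is the pointwise bound $Y \ge s\cdot \mathbf{1}[Y \ge s]$, which holds because on the event $\set{Y \ge s}$ the right-hand side is $s \le Y$, and off this event it is $0 \le Y$. Taking expectations and using linearity and monotonicity of $\E$ gives $\E[Y] \ge s\cdot \pr(Y \ge s)$, and dividing by $s$ yields the bound.

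Applying this with $Y = (X-\mu)^2$ and $s = t^2$ gives
\[
\pr\paren[\big]{|X-\mu| \ge t} \;=\; \pr\paren[\big]{(X-\mu)^2 \ge t^2} \;\le\; \frac{\E[(X-\mu)^2]}{t^2} \;=\; \frac{\var(X)}{t^2},
\]
which is exactly the claimed inequality. There is no genuine obstacle in this proof beyond the one-line verification of the event equivalence; the only stylistic choice is whether to invoke Markov's inequality as a named lemma or to inline the pointwise bound $(X-\mu)^2 \ge t^2\cdot \mathbf{1}[|X-\mu|\ge t]$ and take expectations directly. Given that the surrounding subsection is a brief reference collation, I would favor the inline version, which is self-contained in two lines and avoids introducing an auxiliary named lemma.
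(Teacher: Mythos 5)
Your proof is the standard and correct derivation of Chebyshev's inequality via Markov's inequality applied to $(X-\mu)^2$. The paper itself states this lemma without proof (it closes the statement with a \qed symbol, treating it as a known fact), so there is nothing to compare against; your argument is exactly the textbook proof one would supply.
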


We also use the following concentration result due to McDiarmid~\cite{mcdiarmid}.

\begin{lem}\label{lem:mcdiarmid}
	Let $f$ be a real function of independent random variables $X_1, \dots, X_m$, and let $\mu = \E(f(X_1, \dots, X_m))$. Let $c_1, \dots, c_m \ge 0$ such that, for all $i \in [m]$ and all pairs $(\vec{x},\vec{x'})$ differing only in the $i$th coordinate, we have $|f(\vec{x})-f(\vec{x'})| \le c_i$. Then for all $t > 0$, 
	\[
		\pushQED{\qed}
		\pr(|f(X_1, \dots, X_m) - \mu| \ge t) \le 2e^{-2t^2/\sum_{i=1}^m c_i^2}.\qedhere\popQED
	\]
\end{lem}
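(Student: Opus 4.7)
The plan is to prove this by applying the Azuma--Hoeffding inequality to the Doob martingale associated with $f(X_1,\dots,X_m)$. Let $\mathcal{F}_i$ denote the $\sigma$-algebra generated by $X_1,\dots,X_i$, set $\mathcal{F}_0$ to be the trivial $\sigma$-algebra, and define $Y_i = \E[f(X_1,\dots,X_m) \mid \mathcal{F}_i]$ for $0 \le i \le m$. Then $(Y_i)$ is a martingale with $Y_0 = \mu$ and $Y_m = f(X_1,\dots,X_m)$, so $f(X_1,\dots,X_m) - \mu = \sum_{i=1}^m (Y_i - Y_{i-1})$.

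The crux of the argument is showing that, conditional on $\mathcal{F}_{i-1}$, the martingale increment $Y_i - Y_{i-1}$ lies in an interval of length at most $c_i$. Using the independence of $X_1, \dots, X_m$, one can write $Y_i$ as a function of $X_1, \dots, X_i$ alone by integrating out $X_{i+1}, \dots, X_m$, and similarly $Y_{i-1}$ as a function of $X_1, \dots, X_{i-1}$ obtained by integrating $Y_i$ over $X_i$. The bounded differences hypothesis then implies that for any two possible values $x_i, x_i'$ of $X_i$, the corresponding values of $Y_i$ differ by at most $c_i$; hence the range of $Y_i$ conditional on $\mathcal{F}_{i-1}$ has width at most $c_i$, and the same is true of $Y_i - Y_{i-1}$.

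With this in hand, I would invoke Hoeffding's lemma conditional on $\mathcal{F}_{i-1}$: a mean-zero random variable taking values in an interval of length $c_i$ has conditional moment generating function bounded by $\exp(\lambda^2 c_i^2 / 8)$. Multiplying these bounds across $i$ (using the tower law iteratively) and applying Markov's inequality to $\exp(\lambda(f - \mu))$, then optimising over $\lambda > 0$, yields $\Pr(f - \mu \ge t) \le \exp(-2t^2/\sum_i c_i^2)$. A symmetric argument applied to $-f$ gives the lower-tail bound, and a union bound produces the factor of $2$.

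The main obstacle is the second step, namely the verification that $Y_i - Y_{i-1}$ conditionally takes values in an interval of length at most $c_i$; this is where the independence of the $X_j$ and the hypothesis on $f$ have to be combined carefully via Fubini's theorem. Since this is a classical result stated verbatim from \cite{mcdiarmid}, however, in the final write-up it would be acceptable simply to cite McDiarmid's paper for the proof.
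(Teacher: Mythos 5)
Your sketch is the standard Doob-martingale/Azuma--Hoeffding proof of McDiarmid's bounded-differences inequality, and it is correct. The paper itself gives no proof for this lemma --- it is stated as a known result and cited directly to McDiarmid --- so, as you rightly anticipate in your final sentence, citing \cite{mcdiarmid} is exactly what the paper does and what the final write-up should do.
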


Finally, we use the following Chernoff bounds, proved in (for example) Corollaries 2.3--2.4 and Remark 2.11 of Janson, {\L}uczak and Rucinski~\cite{JLR}.

\begin{lem}\label{lem:chernoff}
  Let $X$ be a binomial or hypergeometric random variable with mean $\mu$.
	\begin{enumerate}[label=(\roman*)]
		\item For all $\epsilon$ with $0 < \epsilon \le\tfrac{3}{2}$, we have $\pr(|X - \mu| \ge \epsilon\mu) \le 2e^{-\epsilon^2\mu/3}$.
		\item For all $t$ with $t \ge 7\mu$, we have $\pr(X \ge t) \le e^{-t}$.\qed{}
	\end{enumerate}
\end{lem}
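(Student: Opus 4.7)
The plan is the classical Chernoff/Bernstein method. For the binomial case, write $X = \sum_{i=1}^n X_i$ as a sum of independent Bernoulli variables. For any $\lambda > 0$, Markov's inequality applied to $e^{\lambda(X-a)}$ gives
\[
  \pr(X \ge a) \le e^{-\lambda a} \prod_{i=1}^n \E\bigl(e^{\lambda X_i}\bigr).
\]
Using the elementary bound $1 + p(e^\lambda - 1) \le \exp\bigl(p(e^\lambda - 1)\bigr)$, this simplifies to $\pr(X \ge a) \le \exp\bigl(\mu(e^\lambda - 1) - \lambda a\bigr)$; an analogous argument applied to $-X$ yields $\pr(X \le a) \le \exp\bigl(\mu(e^{-\lambda} - 1) + \lambda a\bigr)$.

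For part (i), I would set $a = (1 \pm \epsilon)\mu$ and optimise $\lambda$ in each direction, obtaining the classical one-sided tail bounds $\pr\bigl(X \ge (1+\epsilon)\mu\bigr) \le \bigl(e^\epsilon / (1+\epsilon)^{1+\epsilon}\bigr)^\mu$ and $\pr\bigl(X \le (1-\epsilon)\mu\bigr) \le \bigl(e^{-\epsilon} / (1-\epsilon)^{1-\epsilon}\bigr)^\mu$. A routine calculus check shows that both right-hand sides are bounded by $e^{-\epsilon^2 \mu / 3}$ whenever $0 < \epsilon \le 3/2$, and a union bound then delivers~(i). For part~(ii), optimising $\lambda$ with $a = t$ instead yields $\pr(X \ge t) \le (e\mu/t)^t$, and elementary analysis of $t \log(e\mu/t) + t$ verifies that this quantity is nonpositive once $t \ge 7\mu$, giving $\pr(X \ge t) \le e^{-t}$.

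For the hypergeometric case, I would invoke Hoeffding's classical reduction: $\E(\phi(X)) \le \E(\phi(Y))$ for any convex $\phi$, where $Y$ is the binomial variable with the same number of trials and the same mean as~$X$. Since $x \mapsto e^{\lambda x}$ is convex, every MGF-based derivation above transfers verbatim, and so both (i) and (ii) hold unchanged in the hypergeometric setting.

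The main obstacle is pinning down the specific numerical constants --- the $1/3$ in~(i) and the threshold~$7$ in~(ii) --- which come from tight estimates of the entropy-like functions $(1+\epsilon)\log(1+\epsilon) - \epsilon$ and $t\log(t/\mu) - t + \mu$ on the relevant ranges. This is purely elementary but somewhat tedious, which is presumably why the excerpt simply cites~\cite{JLR} rather than reproducing the calculation.
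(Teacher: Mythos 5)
The paper itself does not prove this lemma: it is stated with an immediate \verb|\qed| and the text just above cites Corollaries~2.3--2.4 and Remark~2.11 of Janson, {\L}uczak and Ruci\'nski~\cite{JLR}. So any comparison is really between your sketch and the standard textbook argument. Your overall plan is the right one and matches the standard derivation: the MGF/Markov method for the binomial, and Hoeffding's convex-ordering reduction (hypergeometric dominated by the corresponding binomial under any convex test function) to transfer both bounds to the hypergeometric case.

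There is, however, a genuine slip in your part~(ii). You write that optimising $\lambda$ gives $\pr(X \ge t) \le (e\mu/t)^t$ and that $t\log(e\mu/t) + t \le 0$ once $t \ge 7\mu$. Neither half is quite right. Optimising $\exp\bigl(\mu(e^\lambda-1) - \lambda t\bigr)$ at $\lambda = \log(t/\mu)$ yields $\pr(X \ge t) \le e^{-\mu}(e\mu/t)^t$; you have dropped the factor $e^{-\mu}$. This matters, because $(e\mu/t)^t \le e^{-t}$ is equivalent to $t \ge e^2\mu \approx 7.389\mu$, so your stated intermediate inequality is actually \emph{false} on the range $7\mu \le t < e^2\mu$ (e.g.\ at $t = 7\mu$ the exponent $t\log(e\mu/t)+t = \mu(14 - 7\log 7) \approx 0.38\mu > 0$). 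Restoring the $e^{-\mu}$ fixes it: you then need $t\log(e\mu/t) + t - \mu \le 0$, i.e.\ writing $r = t/\mu$, $r\log r \ge 2r - 1$, and at $r=7$ this reads $7\log 7 \approx 13.62 \ge 13$, which holds with room to spare and is monotone thereafter. Part~(i) and the Hoeffding reduction are fine as described; only the dropped $e^{-\mu}$ needs correcting to match the threshold $7\mu$ claimed in the lemma.
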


\section{From decision to approximate counting CNF-SAT}\label{sec:CNF}
In this section we prove our results for the satisfiability of CNF formulae, formally defined as follows.

\defproblem{\cnfsat{k}} 
{A $k$-CNF formula $F$.}
{Decide if~$F$ is satisfiable.} 

\defproblem{\ccnfsat{k}}
{A $k$-CNF formula $F$.}
{Compute the number $\SAT{F}$ of satisfying assignments of~$F$.}

We also define a technical intermediate problem. For all $s \in \N$, we say that a matrix $A$ is \emph{$s$-sparse} if every row of $A$ contains at most $s$ non-zero entries. In the following definition, $k\in\N$ and $s\in\N$ are constants. 

\defproblem{$\Pi_{k,s}$}
{An $n$-variable Boolean formula $F$ of the form $F(\vec{x}) = F'(\vec{x}) \wedge (A\vec{x} = \vec{b})$. Here $F'$ is a $k$-CNF formula, $A$ is an $s$-sparse $m\times n$ matrix over $\GF{2}$ with $0 \le m \le n$, and $\vec{b} \in \GF{2}^m$.}
{Decide if $F$ is satisfiable.}

We define the \emph{growth rate $\pi_{k,s}$} of $\Pi_{k,s}$ as the infimum over all $\beta>0$ such that $\Pi_{k,s}$ has a randomised algorithm that runs in time $\pf{O}(2^{\beta n})$ and outputs the correct answer with probability at least $2/3$.
Our main reduction is encapsulated in the following theorem.

\newcommand{\stategrowth}{
  Let $k \in \N$ with $k \ge 2$, let $0 < \delta < 1$, and let $s \ge 120\lg^2(6/\delta)/\delta$. Then there is a randomised approximation scheme for \ccnfsat{k} which, when given an $n$-variable formula $F$ and approximation error parameter $\epsilon$, runs in time $\epsilon^{-2} \cdot O\paren[\big]{2^{(\pi_{k,s}+\delta)n}}$.
}
\begin{thm}\label{thm:growthrate}
	\stategrowth{}
\end{thm}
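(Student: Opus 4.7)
The plan is to follow the Valiant--Vazirani framework~\cite{VV}, replacing full-width random XOR constraints by the sparse hashing scheme of Calabro, Impagliazzo, Kabanets and Paturi~\cite{CIKP}, so that adding $m$ random constraints to a $k$-CNF produces an instance of $\Pi_{k,s}$ on the \emph{same} $n$ variables rather than blowing the variable count up to $\Theta(n^2)$. Each oracle call therefore costs only $\pf{O}(2^{\pi_{k,s}n})$.

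First I would dispose of the small-$\SAT{F}$ case: if $\SAT{F}\le 2^{\delta n/2}$, a standard self-reducibility tree (branch on a variable, test feasibility via the $\Pi_{k,s}$-oracle, prune empty subtrees) enumerates all satisfying assignments exactly in time $O(\SAT{F})\cdot\pf{O}(2^{\pi_{k,s}n})\le O(2^{(\pi_{k,s}+\delta)n})$.

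In the main case I would use the CIKP hash: pick a uniform size-$2s$ subset $S\subseteq[n]$, pick each $i\in S$ independently with probability $1/2$ to form a support $T$, pick a uniform bit $b\in\GF{2}$, and impose $\bigoplus_{i\in T}x_i=b$. Conjoining $m$ i.i.d.\ such constraints to $F$ yields an instance $F_m$ of $\Pi_{k,s}$ on the same variables. A short calculation shows that any fixed satisfying assignment survives a single hash with probability exactly $1/2$ (the case $T=\emptyset$ contributes $\Pr[b=0]=1/2$, and otherwise the random bit $b$ matches $\bigoplus_{i\in T}x_i$ with probability $1/2$), so $\E Y_m=2^{-m}\SAT{F}$ for $Y_m:=\SAT{F_m}$. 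Since $\SAT{F}$ is unknown I would run the procedure for every $m\in\{0,1,\dots,n\}$, exactly count $Y_m$ via the small-case subroutine whenever that count is bounded by $2^{\delta n/2}$, and return $2^m Y_m$ for the largest $m$ at which the count comfortably exceeds a constant; a standard argument identifies the regime in which the estimator is accurate.

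The main obstacle is the second-moment analysis, since a single CIKP hash is very far from $2$-wise independent. For two distinct satisfying assignments differing on a coordinate set $D$ of size $d\ge 1$, the joint-survival probability for one hash works out to $\tfrac14(1+p_d)$, where $p_d=\binom{n-d}{2s}/\binom{n}{2s}\le e^{-2sd/n}$ is the probability that the size-$2s$ window $S$ misses $D$ entirely. After $m$ hashes this becomes $4^{-m}(1+p_d)^m$, and the usual manipulation yields
\[
  \frac{\var(Y_m)}{(\E Y_m)^2}\le\frac{2^m}{\SAT{F}}+\frac{1}{\SAT{F}^2}\sum_{d\ge 1}N_d\bigl((1+p_d)^m-1\bigr),
\]
where $N_d\le \SAT{F}\binom{n}{d}$ is the number of ordered satisfying pairs at Hamming distance $d$. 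The hypothesis $s\ge 120\lg(6/\delta)^2/\delta$ is calibrated precisely for this sum: splitting at a threshold of the form $d^\ast=\Theta(\delta n/\lg(1/\delta))$, one bounds the low-$d$ tail via $\binom{n}{d}\le(en/d)^d$ together with the trivial $(1+p_d)^m\le 2^m$, while the high-$d$ tail is killed by $(1+p_d)^m-1\le e^{m e^{-2sd/n}}-1$ once $s$ is large enough. Picking $m$ so that $2^m$ is a small constant fraction of $\SAT{F}$ makes $\var(Y_m)/(\E Y_m)^2$ a constant, and averaging $O(\epsilon^{-2})$ independent copies of $2^m Y_m$ followed by Chebyshev's inequality (Lemma~\ref{lem:chebyshev}) produces the desired $\epsilon$-approximation with probability at least $2/3$. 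Summing over all candidate $m$ and all $\epsilon^{-2}$ trials, the total cost is $\epsilon^{-2}\cdot\poly(n)\cdot\pf{O}(2^{\pi_{k,s}n})=\epsilon^{-2}\cdot O(2^{(\pi_{k,s}+\delta)n})$, as required.
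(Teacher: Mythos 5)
The heart of your proposal — that a single CIKP hash can be made to have constant relative variance by choosing $m$ so that $2^m$ is a constant fraction of $\SAT{F}$, and that $O(\epsilon^{-2})$ repetitions then suffice — does not survive the second-moment calculation you set up. Plug in your own bounds: the low-distance contribution to $\var(Y_m)/(\E Y_m)^2$ is at least
\[
\frac{2^m-1}{\SAT{F}}\sum_{1\le d\le d^\ast}\binom{n}{d}\;\ge\;\Omega\!\left(\frac{2^m}{\SAT{F}}\cdot 2^{\delta' n}\right)
\]
for some $\delta'=\Theta(\delta)$, once you take $d^\ast=\Theta(\delta n/\lg(1/\delta))$. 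Setting $2^m=\Theta(\SAT{F})$ therefore yields relative variance $2^{\Omega(\delta n)}$, not $O(1)$; this is inherent to sparse hashing and is exactly what Lemma~\ref{lem:inaccurate-hash} records (there the relative variance is only bounded by $2^{\delta n/16}$, even under the stronger hypothesis $|S|\ge 2^{m+\delta n}$). Averaging $O(\epsilon^{-2})$ independent copies reduces the variance by only a polynomial factor and cannot compensate for this exponentially large relative variance, so your Chebyshev step fails.

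The paper's fix, which you are missing, is twofold. First, $m$ is kept small enough that each hash retains $2^{\Theta(\delta n)}$ solutions rather than a constant number, so that the exponentially many survivors can absorb the exponential variance. Second, and crucially, the algorithm draws $2^t=2^{\Theta(\delta n)+\Theta(\lg(1/\epsilon))}$ \emph{independent} hashes, but each with $m+t$ rows rather than $m$, so that each hash has only $\approx 2^{\delta n/2}$ expected survivors; the sum over all $2^t$ hashes then has the right expectation $2^{-m}|S|$, its variance is divided by $2^t$ (beating the $2^{\delta n/16}$ blow-up), and the total enumeration work stays at $\pf{O}(2^{\delta n})$. If you instead tried to salvage your version by taking $2^{\Theta(\delta n)}$ copies of a hash with $m$ rows and constant expected survival, the running-time accounting would overshoot the target exponent unless you also compensate, which is essentially what the $m+t$ trick does. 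As written, your proposal would produce an estimator whose failure probability is bounded away from $1/3$ only for constant $n$.
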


Before we prove this theorem, let us derive Theorems~\ref{thm:approx-SETH} and~\ref{thm:approx-ETH} as immediate corollaries.
In both cases, we use the fact that the condition $A\vec{x}=\vec{b}$ can be expressed as an~$s$-CNF formula with~$m 2^{s-1}$ clauses, and thus $\pi_{k,s}\le\pi_{\max\set{k,s},0}$ holds for all constant~$k,s$.

\begin{repthm}{thm:approx-SETH}
  \stateapproxseth{}
\end{repthm}
\begin{proof}
	Let $\delta > 0$ be as specified in the theorem statement. Then for all constant $k,s\in\N$, we have $\pi_{k,s} \le \pi_{\max\set{k,s},0}\le 1-\delta$. The result follows by Theorem~\ref{thm:growthrate} with $s=120\lg^2(6/\delta')/\delta'$.
\end{proof}

\begin{repthm}{thm:approx-ETH}
  \stateapproxeth{}
\end{repthm}

\begin{proof}
  The backward implication is immediate: Any randomised $\frac12$-approximation scheme for \ccnfsat{3} is able to decide \cnfsat{3} with success probability at least 2/3.
  For the forward implication, assume ETH is false.
  By the sparsification lemma~\cite[Lemma 10]{IPZ-spars}, we then have $\pi_{k,0}=0$ for all~$k\in\N$. Hence for all $k,s\in\N$, we obtain $\pi_{k,s}\le \pi_{\max\set{k,s},0}= 0$. The result now follows by~Theorem~\ref{thm:growthrate}.
\end{proof}

\subsection{Proof of Theorem~\ref{thm:growthrate}} 

Given access to an oracle that decides satisfiability queries, we can compute the exact number of solutions of a formula with few solutions using a standard self-reducibility argument given below (see also~\cite[Lemma~3.2]{DBLP:conf/stacs/Thurley12}).

\newcommand{\satfew}[1]{{\normalfont\texttt{CountFew}}} \label{algo:sparse}
\begin{algorr}{$\satfew{\delta}(F,a)$}{Given an instance $F$ of $\Pi_{k,s}$ on $n$ variables, $a \in \N$, and access to an oracle for $\Pi_{k,s}$, this algorithm computes $\SAT{F}$ if $\SAT{F} \le a$; otherwise it outputs \textnormal{FAIL}.}
	\item\label{countfew:oracle} \emph{(Query the oracle)} If $F$ is unsatisfiable, return $0$.
	\item\label{countfew:novars} \emph{(No variables left)} If $F$ contains no variables, return~$1$.
	\item\label{countfew:branch} \emph{(Branch and recurse)} Let $F_0$ and $F_1$ be the formulae obtained from $F$ by setting the first free variable in~$F$ to 0 and 1, respectively.
	  If $\satfew{\delta}(F_0,a)+\satfew{\delta}(F_1,a)$ is at most~$a$, then return this sum; otherwise abort the entire computation and return FAIL.
\end{algorr}

\begin{lem}\label{lem:smallcount}
	$\satfew{\delta}$ is correct and runs in time at most $(\min\set{a,\SAT{F}}+1)\cdot \tilde O(|F|)$. Moreover, each oracle query is a formula with at most $n$ variables.
\end{lem}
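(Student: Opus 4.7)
My plan is to analyze the natural binary recursion tree $T$ whose nodes correspond one-to-one with invocations of $\satfew$, and to establish the four claims (correctness, variable bound per query, oracle count, running time) through a combination of structural induction and tree combinatorics. The bulk of the work lies in the oracle-count bound; the other parts are standard.

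Correctness follows by induction on the number of free variables in $F$. In the base cases, S1 correctly handles unsatisfiable inputs via one oracle query, and S2 correctly returns $1$ when $F$ is satisfiable with no free variables. In the inductive step, every satisfying assignment of $F$ sets the first free variable to either $0$ or $1$, so $\SAT F = \SAT{F_0} + \SAT{F_1}$; combining this with the inductive hypothesis gives the correct sum whenever it is at most $a$, and FAIL whenever $\SAT F > a$. Since branching only instantiates variables, every recursive call — and thus every oracle query — is made on a formula with at most $n$ variables, which handles the ``$\le n$ variables'' claim immediately.

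For the oracle-count bound, I classify each node of $T$ as a \emph{1-leaf} (S2 fires), a \emph{0-leaf} (S1 returns $0$), or an \emph{internal node} (reaches S3). Every internal node has $\SAT \ge 1$, so at least one child has $\SAT \ge 1$; call the node \emph{productive} if both children do, and \emph{non-productive} otherwise, in which case its other child is a 0-leaf. Contracting every maximal chain of non-productive nodes produces a binary tree whose $L_1$ leaves are the 1-leaves and whose $L_1 - 1$ internal vertices are exactly the productive nodes. Each of the $2L_1 - 1$ resulting ``anchor'' vertices receives at most one incoming chain, and each chain has length at most $n$ because the free-variable count strictly decreases along it. Hence there are at most $n(2L_1 - 1)$ non-productive nodes, matched by the same number of 0-leaves, giving a total node count of at most $(4n+2)L_1 \le 8n L_1$ for $L_1 \ge 1$ (and trivially $1 \le 8n$ for $L_1 = 0$). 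In the non-FAIL case $L_1 = \SAT F$; in the FAIL case I would read ``abort the entire computation'' as invoking the right child in S3 with the tightened budget $a - x_0$, where $x_0$ is the value returned by the left child, whence a short induction shows $L_1 \le a + 1$ before the first abort is triggered. Since $L_1 \le \min\{\SAT F, a\} + 1$ in all cases, and each node issues exactly one oracle query in S1, the claimed bound $8n \cdot (\min\{a, \SAT F\} + 1)$ follows. Finally, each recursive invocation performs $\tilde O(|F|)$ non-oracle work for simplifying $F'$ and updating the $\GF 2$ system $A\vec x = \vec b$ after fixing a single variable, so multiplying by the node count gives the stated $(\min\{a,\SAT F\} + 1) \cdot \tilde O(|F|)$ time bound.

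The main obstacle is the chain-based charging of 0-leaves to productive nodes and 1-leaves via the depth bound $n$, combined with the care needed in the FAIL case: without the budget-tightening interpretation of ``abort'', one could in principle explore up to $\Theta(na)$ 1-leaves across the sibling subtrees on the path from the root to the abort node, which would blow up the oracle bound to $\Theta(n^2 a)$ and break the lemma.
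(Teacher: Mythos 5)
Your proof is correct and uses essentially the same recursion-tree strategy as the paper: count the 1-leaves, then bound all remaining nodes by charging them along length-at-most-$n$ chains to 1-leaves. The details of the tree combinatorics differ only cosmetically (the paper notes that each non-leaf lies on a root-to-1-leaf path and that each 0-leaf has a non-0-leaf sibling, giving roughly $(2n+2)x$ nodes; you contract chains of non-productive nodes and get roughly $(4n+2)L_1$), and both arguments land comfortably inside the stated $8n(\min\{a,\SAT{F}\}+1)$. The genuinely valuable part of your write-up is the closing observation about the FAIL case. The paper's proof asserts that ``an easy induction using (S3) implies $x\le 2a$,'' but if (S3) is read literally --- both recursive calls made with the unchanged budget $a$, the comparison performed only after both return --- that induction does not close: as you observe, fully explored left-sibling subtrees along the root-to-abort path can contribute up to $\Theta(na)$ 1-leaves before the abort fires, which would blow the node count up to $\Theta(n^2a)$ and falsify the lemma. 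The tightened-budget reading you adopt (equivalently, a global 1-leaf counter that aborts the entire computation the moment it exceeds $a$) is what makes $L_1\le a+1$ true and is clearly what the authors intend; your remark flags a small but real gap in the paper's exposition of this lemma and supplies the fix.
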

\begin{proof}
  The correctness of $\satfew{\delta}$ follows by induction from $\SAT{F}=\SAT{F_0}+\SAT{F_1}$.
  For the running time, consider the recursion tree of $\satfew{\delta}$ on inputs~$F$ and~$a$.
  At each vertex, the algorithm takes time at most~$\tilde O(\abs{F})$ to compute $F_0$ and $F_1$, and it issues a single oracle call.
  For convenience, we call the leaves of the tree at which $\satfew{\delta}$ returns 0 in Step~\ref{countfew:oracle} or 1 in Step~\ref{countfew:novars} the \emph{0-leaves} and \emph{1-leaves}, respectively.
  Let $x$ be the number of 1-leaves. Each non-leaf is on the path from some 1-leaf to the root, otherwise it would be a 0-leaf. There are at most $x$ such paths, so there are at most $nx$ non-leaf vertices in total. Finally, every 0-leaf has a sibling which is not a 0-leaf, or its parent would be a 0-leaf, so there are at most $(n+1)x$ 0-leaves in total. Overall, the tree has at most $4nx$ vertices. An easy induction using Step~\ref{countfew:branch} implies that $x \le 2a$, and certainly $x \le \SAT{F}$, so the claimed running time is correct.
\end{proof}

When our input formula $F$ has too many solutions to apply $\satfew{\delta}$ efficiently, we first reduce the number of solutions by hashing.
In particular, we use the same hash functions as Calabro et al.~\cite{CIKP};
they are based on random sparse matrices over $\GF{2}$ and formally defined as follows:

\begin{defn}\label{def:CIKP}
  Let $s,m,n \in \N$.
  An \emph{$(s,m,n)$-hash} is a random $m\times n$ matrix $A$ over $\GF{2}$ defined as follows.
  For each row $i \in [m]$, let $R_i$ be a uniformly random size-$s$ subset of $[n]$.
  Then for all $i \in [m]$ and all $j \in R_i$, we choose values $A_{i,j}\in\GF{2}$ independently and uniformly at random, and set all other entries of $A$ to zero.
\end{defn}

For intuition, suppose that $F$ is an $n$-variable $k$-CNF formula, $S$ is the set of satisfying assignments of $F$, and $|S| > 2^{\delta n}$ holds for some small $\delta>0$. It is easy to see that, for all $m,s \in \N$ and uniformly random $\vec{b} \in \GF{2}^m$, if $A$ is an $(s,m,n)$-hash, then the number~$X$ of satisfying assignments of $F(\vec{x}) \wedge (A\vec{x} = \vec{b})$ has expected value $|S|/2^m$. (See Lemma~\ref{lem:inaccurate-hash}.) If~$X$ were concentrated around its expectation, then by choosing an appropriate value of $m$, we could reduce the number of solutions to at most $2^{\delta n}$, apply $\satfew{\delta}$ to count them exactly, then multiply the result by $2^m$ to obtain an approximation to $|S|$. This is the usual approach pioneered by Valiant and Vazirani~\cite{VV}. 

In the exponential setting, however, we can only afford to take $s = O(1)$, which means that~$X$ is not in general concentrated around its expectation. In~\cite{CIKP}, only very limited concentration was needed, but we require strong concentration. To achieve this, rather than counting satisfying assignments of a single formula $F(\vec{x}) \wedge (A\vec{x} = \vec{b})$, we will sum over many such formulae.
We first bound the variance of an individual $(s,m,n)$-hash when $s$ and $S$ are suitably large. Our analysis here is similar to that of Calabro et al.~\cite{CIKP}, although they are concerned with lower-bounding the probability that at least one solution remains after hashing and do not give bounds on variance.

\begin{lem}\label{lem:inaccurate-hash}
  Let $\delta\in\R$ with $0 < \delta < \frac16$ and let $s,m,n\in\N$. Suppose $m \le n$ and $s \ge 20\lg^2(1/\delta)/\delta$. Let $S \subseteq \GF{2}^n$ and suppose $|S| \ge 2^{m+\delta n}$.
  Let $A$ be an $(s,m,n)$-hash, and let $\vec{b} \in \GF{2}^m$ be uniformly random and independent of $A$.
  Let $S' = \setc{\vec{x} \in S }{ A\vec{x} = \vec{b}}$.
  Then $\E(|S'|) = 2^{-m}|S|$ and $\var(|S'|) \le |S|^2 2^{\delta n/8-2m}$.
\end{lem}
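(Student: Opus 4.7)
The mean computation is immediate: for each $x \in S$, $\pr(Ax = b) = 2^{-m}$ because $\vec b$ is uniform and independent of $A$, so linearity of expectation gives $\E(|S'|) = 2^{-m}|S|$. For the variance, I plan to expand $|S'|^2 = \sum_{x,y \in S} \mathbf{1}[Ax = b]\,\mathbf{1}[Ay = b]$ and condition on $A$ to see that $\pr(Ax = Ay = b) = 2^{-m}\pr(A(x+y) = 0)$. Writing $z = x + y$ (over $\GF{2}$) and grouping the pairs by $z$, this reduces everything to estimating $\pr(Az = 0)$ as a function of the Hamming weight $w(z)$. A single row $A_i$ with support $R_i$ satisfies $A_iz = 0$ deterministically when $R_i \cap \op{supp}(z) = \emptyset$ and with probability $\tfrac12$ otherwise, giving $\pr(A_i z = 0) = \tfrac12(1 + p(w(z)))$ with $p(w) = \binom{n-w}{s}/\binom{n}{s}$; row independence then yields $\pr(Az=0) = 2^{-m}(1+p(w(z)))^m$, and I would end up with
\[
\var(|S'|) = 2^{-2m}\sum_{z \in \GF{2}^n} N_z\bigl[(1+p(w(z)))^m - 1\bigr],
\]
where $N_z \le |S|$ counts the pairs $(x,y) \in S \times S$ with $x + y = z$ and $\sum_z N_z = |S|^2$.

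The key elementary estimate I would use is $p(w) \le (1 - w/n)^s$, obtained by writing $\binom{n-w}{s}/\binom{n}{s} = \prod_{i=0}^{s-1}(n-w-i)/(n-i)$ and bounding each factor by $(n-w)/n$. With this in hand I would split the sum above into three ranges of $w(z)$. The term $z = 0$ contributes $|S|(2^m - 1)$ to the sum, so after dividing by $2^{2m}$ at most $|S|\cdot 2^{-m}$, which is dominated by $|S|^2 \cdot 2^{\delta n/16 - 2m}$ once the hypothesis $|S| \ge 2^{m+\delta n}$ is invoked. For the ``light'' terms $1 \le w(z) \le \alpha n$, where $\alpha$ is a threshold still to be chosen, I would combine $N_z \le |S|$, the crude bound $(1+p(w))^m \le 2^m$, and the entropy estimate $\sum_{w \le \alpha n}\binom{n}{w} \le 2^{H(\alpha)n}$ to obtain a contribution of at most $|S|\cdot 2^{m + H(\alpha)n}$. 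For the ``heavy'' terms $w(z) > \alpha n$ I would instead use $p(w) \le (1-\alpha)^s$ together with $(1+x)^m \le e^{mx}$ to bound $(1+p(w))^m - 1$ uniformly by roughly $m(1-\alpha)^s$, and then absorb the pair count using $\sum_z N_z \le |S|^2$.

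The main obstacle, and the place where the hypothesis $s \ge 20\lg(1/\delta)^2/\delta$ actually enters, is to choose $\alpha$ so that \emph{both} residual inequalities $H(\alpha) \le 17\delta/16$ and $m(1-\alpha)^s \lesssim 2^{\delta n/16}$ hold simultaneously. My plan is to take $\alpha = \Theta(\lg(1/\delta)/s)$: the heavy-term inequality follows from $(1-\alpha)^s \le e^{-\alpha s}$ together with $m \le n$, while the light-term inequality uses the standard estimate $H(\alpha) \le \alpha\lg(e/\alpha)$, which with this choice of $\alpha$ and the hypothesised lower bound on $s$ simplifies to $O\bigl(\delta(\lg s)/\lg(1/\delta)\bigr) = O(\delta)$. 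Collecting the three pieces, dividing by $2^{2m}$, and using $|S| \ge 2^{m + \delta n}$ one final time to upgrade the light-terms bound $|S|\cdot 2^{m+H(\alpha)n}$ into one of the form $|S|^2 \cdot 2^{\delta n/16}$ should then yield the stated variance bound $\var(|S'|) \le |S|^2 \cdot 2^{\delta n/16 - 2m}$.
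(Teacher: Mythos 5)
Your proposal is correct and follows essentially the same route as the paper: reduce the second moment to $\pr(A\vec{z} = 0)$ as a function of the Hamming weight $w(\vec{z})$ of $\vec{z}=\vec{x}+\vec{y}$, compute $\pr((A\vec{z})_i = 0) = \tfrac12(1+p(w(\vec{z})))$ with $p(w)=\binom{n-w}{s}/\binom{n}{s}$, split by a weight threshold $\alpha n$, bound the low-weight part via a Hamming-ball volume estimate, bound the high-weight part via $p(w)\le e^{-\alpha s}$, and use the hypothesis on $s$ so that $\alpha s = \Omega(\lg(1/\delta))$ while $h(\alpha) = O(\delta)$. The paper's only cosmetic differences are that it stays in the pair variables $(\vec{x},\vec{y})$ instead of passing to $\vec{z}$, merges the diagonal $\vec{z}=\vec{0}$ into the low-weight ball so that the entire ball contributes at most $\E(|S'|)^2$ to $\E(|S'|^2)$ and therefore drops out of the variance, and fixes $\alpha=h^{-1}(\delta)$ (a function of $\delta$ alone rather than of $s$). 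One small slip to fix: you say you will bound $(1+p(w))^m-1$ ``uniformly by roughly $m(1-\alpha)^s$'' using $(1+x)^m\le e^{mx}$, but Bernoulli's inequality gives $(1+p)^m-1\ge mp$, not $\le mp$; the bound you actually get from $(1+x)^m\le e^{mx}$ is $(1+p(w))^m-1\le e^{m(1-\alpha)^s}$. Consequently the residual inequality you need for the heavy terms is that $m(1-\alpha)^s$ is at most about $\delta n$ (so that $e^{m(1-\alpha)^s}\le 2^{\delta n/16}$), not the vacuously true statement $m(1-\alpha)^s\le 2^{\delta n/16}$. Your listed ingredients $m\le n$ and $(1-\alpha)^s\le e^{-\alpha s}=\delta^{\Theta(1)}$ supply exactly this, so the argument goes through once the slip is corrected.
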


	\begin{proof}
		For each $\vec{x} \in \GF{2}$, let $I_{\vec{x}}$ be the indicator variable of the event $A\vec{x}=\vec{b}$. Exposing $A$ implies $\pr(I_{\vec{x}}) = 2^{-m}$ for all $\vec{x} \in \GF{2}^n$, and hence
		\[\E(|S'|) = \sum_{\vec{x} \in S}\pr(I_{\vec{x}}) = 2^{-m}|S|.\]
		
		We now bound the second moment. We have
		\begin{align}\nonumber
			\E(|S'|^2) 
			&= \sum_{(\vec{x},\vec{y}) \in S^2} \E(I_{\vec{x}}I_{\vec{y}}) 
			= \sum_{(\vec{x},\vec{y}) \in S^2} \pr(A\vec{x} = A\vec{y} = \vec{b})\\\label{eqn:inacc-hash-1}
			&= \sum_{(\vec{x},\vec{y}) \in S^2} \prod_{i=1}^m \pr((A\vec{x})_i = (A\vec{y})_i = \vec{b}_i).
		\end{align}
    When $\vec{x}$ and $\vec{y}$ are fixed, the events in~\eqref{eqn:inacc-hash-1} are identically distributed and we write
    $p_{\vec x,\vec y}=\pr(\vec{a}^T\vec{x} = \vec{a}^T\vec{y} = b)$, where $b\in\set{0,1}$ is sampled uniformly at random and $\vec{a}\in\set{0,1}^n$ is sampled by first sampling a size-$s$ set $R\subseteq\set{1,\dots,n}$ and then setting the bits $\vec{a}_j$ uniformly for~$j\in R$, and $\vec{a}_j=0$ for $j\not\in R$.
    Using this shorthand notation, we split the sum in~\eqref{eqn:inacc-hash-1} depending on whether the Hamming distance~$d(\vec{x},\vec{y})$ between the vectors is at most~$\alpha n$ or larger, for some parameter~$\alpha<\frac12$ specified later.
    \begin{align}\label{eqn:inacc-hash-2}
			\E(|S'|^2) 
			&= \sum_{(\vec{x},\vec{y}) \in S^2} p_{\vec{x},\vec{y}}^m
      =
       \sum_{\substack{(\vec{x},\vec{y}) \in S^2 \\ d(\vec{x},\vec{y}) \le \alpha n}}
       p_{\vec{x},\vec{y}}^m
       +
       \sum_{\substack{(\vec{x},\vec{y}) \in S^2 \\ d(\vec{x},\vec{y}) > \alpha n}}
       p_{\vec{x},\vec{y}}^m\,.
    \end{align}
    We now provide upper bounds for these two sums.
    For the first sum, let us write $h:[0,1]\rightarrow[0,1]$ for the binary entropy function $h(\alpha)=-\alpha \lg \alpha - (1-\alpha)\lg(1-\alpha)$; it is known that the Hamming ball of radius~$\alpha n$ around a binary vector~$\vec{x}$ contains at most~$2^{h(\alpha) n}$ binary vectors~$\vec{y}$.
    Thus the first sum is bounded by~$\abs{S} 2^{h(\alpha)n} \max\set{p_{\vec{x},\vec{y}}^m}$.
    To bound the maximum, note by exposing~$\vec{a}$ that~$p_{\vec{x},\vec{y}}\le\frac 12$ holds for all~$\vec{x},\vec{y}$.
    Thus, the first sum in~\eqref{eqn:inacc-hash-2} is bounded by~$\abs{S}2^{h(\alpha)n-m}$.
    
    The second sum in~\eqref{eqn:inacc-hash-2} is at most
    $\abs{S}^2 \max\setc{p_{\vec{x},\vec{y}}^m}{d(\vec x,\vec y)>\alpha n}$, and so it remains to bound $p_{\vec{x},\vec{y}}$ for vectors~$\vec x$ and $\vec y$ whose distance is more than~$\alpha n$.
    Write $\vec{x}_R\in\GF{2}^R$ for the projection of~$\vec{x}$ to the coordinates of~$R$.
    Conditioning on the event~$\vec x_R=\vec y_R$, we get
    \begin{align}\nonumber
      p_{\vec x,\vec y}
      &=
      \pr\paren[\Big]{
      \vec{a}^T\vec{x}=\vec{a}^T\vec{y}=b
      \;\big\vert\;\vec x_R\ne\vec y_R
      }\cdot
      \pr(\vec x_R\ne\vec y_R)\\\nonumber
      &\qquad+
      \pr\paren[\Big]{
      \vec{a}^T\vec{x}=\vec{a}^T\vec{y}=b
      \;\big\vert\;\vec x_R=\vec y_R
      }\cdot
      \pr(\vec x_R=\vec y_R)
      \\\label{eq:pxy}
      &\le
      \pr\paren[\Big]{
      \vec{a}^T\vec{x}=\vec{a}^T\vec{y}=b
      \;\big\vert\;\vec x_R\ne\vec y_R
      }
      +
      \tfrac12\cdot
      \pr\paren[\Big]{
      \vec x_R=\vec y_R
	  }
	  \,.
    \end{align}
    We claim that the first summand of~\eqref{eq:pxy} is equal to~$\frac14$ and the second is bounded above by $\frac12 e^{-\alpha s}$.
    Indeed, conditioned on $\vec x_R\ne \vec y_R$, there is a coordinate~$c\in R$ with $\vec x_c\ne\vec y_c$.
    Without loss of generality, assume $\vec x_c=1$ and $\vec y_c=0$.
    Under this conditioning, the events $\vec a^T\vec x=\vec a^T\vec y$ and $\vec a^T\vec y=b$ are actually independent, because~$\vec a_c$ is a uniform bit that only affects the first event and~$b$ is a uniform bit that only affects the second. More precisely, after exposing~$R$ with~$\vec x_R\ne \vec y_R$ and~$\vec a_j$ for all $j\in R\setminus\set{c}$, the probability that~$\vec a_c$ and~$b$ are set correctly is~$\frac 14$.
    To bound the second summand of~\eqref{eq:pxy}, recall that $d(\vec{x},\vec{y}) \ge \alpha n$ and $|R| = s$, and observe
		\begin{align*}
			\pr\paren[\Big]{\vec{x}_{R} = \vec{y}_{R}}
      &\le \frac{\binom{n-\lceil\alpha n\rceil}{s}}{\binom{n}{s}} 
			\le (1-\lceil \alpha n\rceil /n)^s
			\le e^{-\alpha s}.
		\end{align*}
    Putting the bounds on the terms in~\eqref{eq:pxy} together, we arrive at
    \begin{align*}
      p_{\vec{x},\vec{y}}
      \le\tfrac14+\tfrac12 e^{-\alpha s}
      =\tfrac14(1+2e^{-\alpha s})
      \le \tfrac14 e^{2e^{-\alpha s}}\,.
    \end{align*}
    This allows us to bound the second moment and thus the variance as well:
    \begin{align}\label{eq:varbound}
      \var(|S'|) = \E(|S'|^2)-\E(|S'|)^2
			&\le
      \paren[\Big]{
      \abs{S} 2^{h(\alpha)n-m}
      +
      \abs{S}^2 4^{-m} e^{m\cdot 2e^{-\alpha s}}
      }
      - \abs{S}^2 2^{-2m}
      \,.
    \end{align}
    By assumption we have~$\abs{S}\ge 2^{m+\delta n}$, and thus
    $\abs{S}^2 2^{-2m} \ge \abs{S} 2^{\delta n-m}$.
    Now we set $\alpha<\frac12$ such that $h(\alpha)=\delta$ holds.
    Since $\delta < \frac16$, we have $\alpha = h^{-1}(\delta) \ge \delta/(2\lg(6/\delta)) \ge \delta/(4\lg(1/\delta))$.
    It follows that
    $\alpha s \ge 5\lg(1/\delta) \ge 2\ln(4/\delta)$,
    and together with~\eqref{eq:varbound} we get $\var(|S'|) \le |S|^2e^{\delta^2 m/8}/2^{2m}$.
    Since $m \le n$ and $\delta < 1/\lg(e)$, the result follows.
	\end{proof}
	
	We now state our algorithm for Theorem~\ref{thm:growthrate} that reduces from approximate counting for $k$-SAT to decision for $\Pi_{k,s}$.
  In the following definition, $\delta$ is a rational constant with $0 < \delta < \frac13$.
	
	\newcommand{\satalgo}[1]{{\normalfont\texttt{ApxToD}}$_{#1}$}
	\begin{algorr}
		{\satalgo{\delta}}
    {Given an $n$-variable instance $F$ of \ccnfsat{k}, a rational number~$\epsilon\in(0,1)$, and access to an oracle for $\Pi_{k,s}$ for some 
    	$s \ge 40\lg^2(2/\delta)/\delta$,
		this algorithm computes a rational number $z$ 
		such that $(1-\epsilon)\SAT{F} \le z \le (1+\epsilon)\SAT{F}$ holds with probability at least $\frac34$.}
    \item\label{apxtod:constant} \emph{(Brute-force on constant-size instances)}\\
		If $n/\lg n \le 8/\delta$, solve the problem by brute force and return the result.
    \item\label{apxtod:few} \emph{(If there are few satisfying assignments, count them exactly)}\\
      Let $t = \ceil{\delta n/2+2\lg(1/\epsilon)}$, and apply $\satfew{}$ to $F$ and $a=2^{t+\delta n/2}$. Return the result if it is not equal to FAIL.
    \item\label{apxtod:outer} \emph{(Try larger and larger equation systems)}
      For each $m\in\set{0,\dots,n-t}$:
    \begin{enumerate}[label=\texttt{\alph*}]
      \item\label{apxtod:inner} For each $i\in\set{1,\dots,2^t}$:
        \begin{itemize}
          \item \emph{(Prepare query)}
            Independently sample an $(s,m+t,n)$-hash~$A_{m,i}$ and a uniformly random vector~$\vec{b_{m,i}} \in \GF{2}^{m+t}$.
            Let $F_{m,i}=F(\vec{x}) \wedge (A_{m,i}\vec{x} = \vec{b_{m,i}})$. 
          \item \emph{(Ask oracle using subroutine)}
            Let $z_{m,i}$ be the output of~$\satfew{}\paren*{F_{m,i},4a}$.
      \item \emph{(Bad randomness or $m$ too small)}
            If $z_{m,i}=\mbox{FAIL}$ or if $\sum_{j=1}^{i} z_{m,j}>4a$, then go to the next~$m$ in the outer for-loop.
            \end{itemize}
      \item\label{apxtod:return} \emph{(Return our estimate)}
        Return~$z=2^m\sum_{i=1}^{2^t} z_{m,i}$.
      \end{enumerate}
	\end{algorr}

	\begin{lem}\label{lem:satalgo}
    \satalgo{\delta} is correct for all $\delta \in(0, \frac13)$ and runs in time at most $\epsilon^{-2}\cdot \pf{O}(2^{\delta n})$.
    Moreover, the oracle is only called on instances with at most $n$ variables.
	\end{lem}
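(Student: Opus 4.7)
The plan is to analyze the three return paths of \satalgo{\delta} separately. Step~A1 only fires when $n \le 8\lg(n)/\delta$, forcing $n$ to be a constant depending only on~$\delta$, so brute force runs in $O(1)$ time. Step~A2 invokes $\satfew{}$ with threshold $a = 2^{t+\delta n/2}$; by Lemma~\ref{lem:smallcount} this returns $\SAT{F}$ exactly (hence trivially an $\epsilon$-approximation) when $\SAT{F} \le a$, in time $\tilde O(a\cdot|F|) = \epsilon^{-2}\pf O(2^{\delta n})$ because $2^t = O(\epsilon^{-2}2^{\delta n/2})$. If Step~A2 returns FAIL, we know $\SAT{F} > 2^{t+\delta n/2}$ and move on to Step~A3.

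For Step~A3, the key random variable is $X_m := \sum_{i=1}^{2^t}\SAT{F_{m,i}}$. A short induction on~$i$ using the updates $a := a - z_{m,i}$ shows that the inner loop at~$m$ succeeds and returns $2^m X_m$ precisely when $X_m \le 2^{t+\delta n/2+2}$; linearity gives $\E(X_m) = \SAT{F}/2^m$. The hardest step is the variance bound: I would invoke Lemma~\ref{lem:inaccurate-hash} with its $\delta$-parameter set to $\delta/2$ on each $F_{m,i}$, viewed as an $(s,m+t,n)$-hash, which is legal because $\delta/2 < 1/6$ and the assumption $s \ge 40\lg(2/\delta)^2/\delta$ exactly matches the lemma's sparsity hypothesis $s \ge 20\lg(2/\delta)^2/(\delta/2)$. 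Independence across~$i$ then collapses variances to give $\var(X_m)/\E(X_m)^2 \le 2^{\delta n/32 - t} \le \epsilon^2\cdot 2^{-15\delta n/32}$, using $t \ge \delta n/2 + 2\lg(1/\epsilon)$. Chebyshev's inequality yields $\pr(\neg E_m) \le 16\cdot 2^{-15\delta n/32}$, where $E_m := \{|X_m - \E(X_m)| \le (\epsilon/4)\E(X_m)\}$. Letting $m^*$ be the smallest non-negative integer with $\E(X_{m^*}) \le 2^{t+\delta n/2+1}$, a short calculation shows $0 \le m^* \le n-t$ and that Lemma~\ref{lem:inaccurate-hash}'s side condition $\SAT{F} \ge 2^{m+t+\delta n/2}$ holds at every $m \le m^*$ (as then $\E(X_m) > 2^{t+\delta n/2}$). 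Union-bounding over these $O(n)$ indices and using $n/\lg n > 8/\delta$ (else we brute-forced) pushes $\pr\paren*{\bigcap_{m \le m^*}E_m}$ above~$3/4$. On this intersection, $E_{m^*}$ forces $X_{m^*} \le (1+\epsilon/4)\cdot 2^{t+\delta n/2+1} < 2^{t+\delta n/2+2}$, so the outer loop terminates at some $m \le m^*$, and at that~$m$ the event $E_m$ implies $2^m X_m \in [(1-\epsilon/4)\SAT{F},(1+\epsilon/4)\SAT{F}] \subseteq [(1-\epsilon)\SAT{F},(1+\epsilon)\SAT{F}]$. A subtlety is that for $m < m^*$ the inner loop's outcome is not determined by $E_m$ alone, but this is harmless: any success under $E_m$ still returns an $(\epsilon/4)$-approximation, and any failure simply advances~$m$.

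For the running time of Step~A3, Lemma~\ref{lem:smallcount} bounds each inner call by $(\min(a, \SAT{F_{m,i}})+1)\tilde O(|F|)$. Since~$a$ decreases monotonically from $2^{t+\delta n/2+2}$ and is capped at that value even when a single call returns FAIL, summing over the $2^t$ iterations gives a total inner-loop cost of $\tilde O(2^{t+\delta n/2}\cdot|F|)$ at each~$m$; summing over the $O(n)$ outer iterations and bounding $|F|\le\poly(n)$ then yields $\pf O(2^{t+\delta n/2}) = \epsilon^{-2}\pf O(2^{\delta n})$. Finally, each $F_{m,i}$ has $n$ variables and $m+t\le n$ sparse equations (a valid $\Pi_{k,s}$-instance on $n$ variables), and $\satfew{}$ only queries sub-formulas of $F_{m,i}$, so every oracle call is on an instance with at most $n$ variables, as required.
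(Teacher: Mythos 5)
Your proof is correct and follows essentially the same structure as the paper's: A1 and A2 are handled directly via Lemma~\ref{lem:smallcount}, and A3 is analysed by invoking Lemma~\ref{lem:inaccurate-hash} with parameter $\delta/2$ on the $(s,m+t,n)$-hashes, applying Chebyshev with independence over $i$ to concentrate $X_m = \sum_i |S_{m,i}|$, union-bounding over $m$, and then arguing that the loop terminates at the first $m$ with $X_m \le 2^{t+\delta n/2+2}$, which under the concentration events is at most your $m^*$ (the paper's $M$). The only cosmetic differences are that you work with the fixed tolerance $\epsilon/4$ in the events $E_m$ where the paper writes the equivalent $2^{-(t-\delta n/2)/2}$, and your definition of $m^*$ (smallest $m$ with $\E(X_m) \le 2^{t+\delta n/2+1}$) is the paper's $M$ up to the direction of the defining inequality; both lie in $[0,n-t]$ and satisfy the side condition of Lemma~\ref{lem:inaccurate-hash} for all $m$ up to that index.
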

	\begin{proof}
    Let $F$ be a $k$-CNF formula on $n$ variables and let $\epsilon\in(0,1)$.
    For the running time, note that \aref{apxtod:constant} takes time~$O(2^{1/\delta})=O(1)$,  \aref{apxtod:few} takes time at most $\pf{O}(a)$ by Lemma~\ref{lem:smallcount}.
    By the same lemma, each invocation of $\satfew{}$ on input~$F_{m,i}$ in~\ref{apxtod:outer} takes time~$\pf{O}(\min\set{z_{m,i},a}+1)$.
    Moreover, the outer loop in \aref{apxtod:constant} is run at most~$n-t$ times, and for each  fixed~$m$, executing  \aref{apxtod:outer}\ref{apxtod:inner} in its entirety takes time at most $\pf{O}(a)$ due to the check whether $\sum_{j=1}^i z_{m,k}>4a$ holds.
    Thus the overall running time of the algorithm is~$\pf{O}(a)\le \pf{O}(\epsilon^{-2}2^{\delta n})$ as required.

    It remains to prove the correctness of the algorithm.
    If it terminates at \aref{apxtod:constant} or \aref{apxtod:few}, then correctness is immediate from Lemma~\ref{lem:smallcount}. Suppose not, so that $n/\lg n > 8/\delta$ holds, and the set $S$ of solutions of $F$ satisfies $|S| \ge 2^{t+\delta n/2}$. Let $M = \max\setc{m \in \Z }{ |S| \ge 2^{m+t+\delta n/2}}$, and note that $0 \le M \le n-t$ and $|S| \le 2^{M+t+\delta n/2+1}$.
    The formulas~$F_{m,i}$ are oblivious to the execution of the algorithm, so for the analysis we may view them as being sampled in advance.
    Let $S_{m,i}$ be the set of solutions to~$F_{m,i}$.
    For each~$m$ with $0 \le m \le M$, let $\mathcal{E}_m$ be the following event:
		\[
			\left|\sum_{i=1}^{2^t}|S_{m,i}| - 2^{-m}|S|\right| \le 2^{-m-(t-\delta n/2)/2}\cdot |S| \,.
		\]
	Thus $\mathcal{E}_m$ implies $\left|2^m\sum_{i=1}^{2^t}|S_{m,i}| - |S|\right| \le \epsilon|S|$.
	By Lemma~\ref{lem:inaccurate-hash} applied with
	$\delta/2$ in place of $\delta$ and $m+t$ in place of $m$, for all $0 \le m \le M$ and $1 \le i \le 2^t$, we have $\E(|S_{m,i}|) = 2^{-m-t}|S|$ and $\var(|S_{m,i}|) \le |S|^2 2^{\delta n/16-2m-2t}$.
	Since the $S_{m,i}$'s are independent, it follows by Lemma~\ref{lem:chebyshev} that 
		\[
			\pr(\mathcal{E}_m) \ge 1 - \frac{2^t \cdot |S|^22^{\delta n/16-2m-2t}}{2^{-2m-t+\delta n/2}|S|^2} \ge 1-2^{-\delta n/4} \ge 1-1/n^2.
		\]
		Thus a union bound implies that, with probability at least $3/4$, the event~$\mathcal{E}_m$ occurs for all~$m$ with $0 \le m \le M$ simultaneously.
		Suppose now that this happens.
    Then in particular, we have~\[\sum_{i=1}^{2^t}\abs{S_{M,i}}\le(1+\epsilon)2^{-M}\abs{S}\le 2^{t+\delta n /2 + 2}\,.\]
    But then, if \satalgo{\delta} reaches iteration~$m=M$, none of the calls to $\satfew{}$ fail in this iteration and we have $z_{M,i}=\abs{S_{M,i}}$ for all~$i\in\set{1,\dots,2^t}$.
    Thus \satalgo{\delta} returns some estimate~$z$ while $m\le M$.
    Moreover, since $\mathcal{E}_m$ occurs, this estimate satisfies
    $(1-\epsilon)|S| \le z \le (1+\epsilon)|S|$ as required.
   Thus \satalgo{\delta} behaves correctly with probability at least $3/4$, and the result follows.
	\end{proof}
	
  \begin{repthm}{thm:growthrate}
    \stategrowth{}
  \end{repthm}
	\begin{proof}
    If $\epsilon < 2^{-n}$, then we solve the \ccnfsat{k} instance exactly by brute force in time $\pf{O}(\epsilon^{-1})$, so suppose $\epsilon \ge 2^{-n}$. By the definition of $\pi_{k,s}$, there exists a randomised algorithm for $\Pi_{k,s}$ with failure probability at most $1/3$ and running time at most $\pf{O}(2^{(\pi_{k,s}+\delta/3)n})$. By Lemma~\ref{lem:chernoff}(i), for any constant $C$, by applying this algorithm $\lg(1/\epsilon)\cdot O(n) = O(n^2)$ times and outputting the majority answer, we may reduce the failure probability to at most $\epsilon^2/Cn2^{\delta n/3}$. We apply \satalgo{\delta/3} to $F$ and $\epsilon$, using the randomized algorithm for $\Pi_{k,s}$ in place of the $\Pi_{k,s}$-oracle. If we take~$C$ sufficiently large, then by Lemma~\ref{lem:satalgo} and a union bound, the overall failure probability is at most~$1/3$, and the running time is $\epsilon^{-2}\cdot\pf{O}(2^{(\pi_{k,s}+2\delta/3)n}) = \epsilon^{-2}\cdot O(2^{(\pi_{k,s}+\delta)n})$ as required.
	\end{proof}
	
	\section{Approximately Counting Edges in Bipartite Graphs}\label{sec:finegrain}
	
  In this section, we prove our main result, Theorem~\ref{thm:finegrain}.
  Recall from Section~\ref{sec:framework} that it consists of an algorithm that is given access to a bipartite graph via an adjacency oracle and an independence oracle. Throughout this section, we fix $G = (U,V,E)$ and $\epsilon > 0$ as the input to our edge-counting algorithm, and we define $n = |U \cup V|$.
		
	\subsection{Random subsets of balanced sets}
	A set~$X \subseteq V$ is \emph{balanced} if the graph $G[U,X]$ is not ``star-like'', with a large proportion of edges incident to a single vertex in $X$. We formally define this notion, and show that if~$X'$ is a uniformly random subset of a balanced set~$X$, then~$\eb{X}\approx 2\eb{X'}$ holds with suitably high probability.
	\begin{defn}\label{defn:balanced}
		For any real $\xi$ with $0<\xi\le 1$, a set $X \subseteq V$ is \emph{$\xi$-balanced} if every vertex in~$X$ has degree at most $\xi \eb{X}$.
	\end{defn}
	\begin{lem}\label{lem:balanced-halve}
		Let $X \subseteq V$ be a set and let $X' \subseteq X$ be a random subset formed by including each vertex of $X$ independently with probability~$\frac12$.
		\begin{enumerate}[label=(\roman*)]
			\item
			With probability at least $1-2\exp\paren*{-\abs{X}/24}$,
			we have $\abs{X'} \le \frac34 \abs{X}$.
			\item
			Let $\gamma,\xi$ be reals with $0 < \xi \le 1$ and $0<\gamma\le\frac12$.
			If $X$ is $\xi$-balanced, then with probability at least $1-2\exp\paren*{-2\gamma^2/\xi}$,
			we have
		\[
			\paren*{\tfrac{1}{2}-\gamma}\cdot\eb{X}\le \eb{X'}\le \paren*{\tfrac{1}{2}+\gamma}\cdot\eb{X}\,.
		\]
	\end{enumerate}
	\end{lem}
	\begin{proof}
		For the first claim, note that $\E(|X'|) = |X|/2$ holds, and thus by Lemma~\ref{lem:chernoff}(i) we~have
		\begin{equation*}
			\pr\left(|X'| \ge \tfrac34 \cdot |X|\right) \le \pr\left(\left||X'|-\tfrac12 \cdot |X|\right| \ge \tfrac14 \cdot |X|\right) \le 2e^{-|X|/24}\,.
		\end{equation*}

    Now we prove the second claim. For each vertex $v \in X$, let $I_v$ be the indicator random variable of the event $v \in X'$. Then~$\eb{X'}$ is a function of $\setc{I_v }{ v \in X}$, and changing a single indicator variable $I_v$ alters $\eb{X'}$ by exactly~$d(v)$. Moreover, $\mathbb{E}(\eb{X'}) = \eb{X}/2$. It therefore follows by Lemma~\ref{lem:mcdiarmid} that
		\begin{equation}\label{eqn:balanced-halve}
			\pr\left(\left|\eb{X'} - \tfrac12 \cdot\eb{X}\right| \ge
			\gamma\cdot\eb{X}\right)
			\le 2\exp\left(\frac{-2\gamma^2\eb{X}^2}{\sum_{v \in X}{d(v)}^2} \right)\,.
		\end{equation}
    Since~$X$ is $\xi$-balanced, we have
    $\sum_{v\in X} {d(v)}^2
    \le \xi\eb{X}\cdot\sum_{v\in X}d(v)=\xi \eb{X}^2$.
		With~\eqref{eqn:balanced-halve}, the claimed upper bound of $2\exp\paren*{-2\gamma^2/\xi}$ on the error probability follows.
	\end{proof}
	
    \newcommand{\sampleneighbors}{\textnormal{\texttt{SampleNeighbours}}}
	In using Lemma~\ref{lem:balanced-halve}, we will take $\gamma = \Theta(\epsilon/\log n)$ and $\xi = \Theta(\gamma^2/\log\log n)$. To motivate this choice, consider the following toy argument:
	
	Suppose simplistically that Lemma~\ref{lem:balanced-halve}(ii) was true for all sets, not just for balanced sets, and that~$\xi$ could be chosen arbitrarily. We will see later (using the {\sampleneighbors} algorithm defined in Section~\ref{sec:degrees}) that, if $\eb{X}$ is small, we can quickly determine it exactly. In this situation, the following algorithm would estimate $e(G)$: start with $X_0 = V$. Given~$X_i$, check whether $\eb{X_i}$ is small enough to determine exactly. If so, output $2^i\eb{X_i}$. If not, form $X_{i+1}$ from $X_i$ by including each element independently with probability $\tfrac{1}{2}$. Let $X_t$ be the final set formed this way.
	By Lemma~\ref{lem:balanced-halve}(i), we have $t = O(\log n)$ with high probability. By our supposed simplistic version of Lemma~\ref{lem:balanced-halve}(ii), we have $\eb{X_t} \in (1\pm \gamma)^t\eb{X_0}/2^t = (1\pm\gamma)^t e(G)/2^t$; thus the algorithm gives a valid $\epsilon$-approximation whenever $(1\pm \gamma)^t \subseteq (1\pm \epsilon)$. We have $(1\pm \gamma)^t \subseteq 1 \pm 4t\gamma$ for sufficiently small $\gamma$, so this holds for $\gamma = O(\epsilon/\log n) = O(\epsilon/t)$. Finally, using a union bound together with the fact that $t=O(\log n)$ holds with high probability, Lemma~\ref{lem:balanced-halve}(ii) holds at each stage with probability at least $1 - O(\log n)\cdot \exp(-2\gamma^2/\xi)$; this can be made arbitrarily large by taking $\xi = O(\gamma^2/\log\log n)$. 
	
	Of course, Lemma~\ref{lem:balanced-halve}(ii) is not true for all sets --- it fails badly if $G[U,X]$ is a star, for example. While the above argument does not use independence queries at all, we will need them to deal with unbalanced sets.
	
	\subsection{Estimating vertex degrees}\label{sec:degrees}
	In order to test whether a set~$X$ is balanced and thus whether taking a uniformly random subset of~$X$ will give a good approximation of~$\eb{X}$ via Lemma~\ref{lem:balanced-halve}, we will efficiently approximate the \emph{relative degrees} $d(v)/\abs{\nbh{X}}$ for all $v \in X$.
	To this end, we will use independence queries to uniformly sample a random subset~$Y\subseteq\nbh{X}$ of a given size~$y$.
	We show that, with high probability, the random variable~${\abs{N(v)\cap Y}}/{\abs{Y}}$ is a $\tfrac{1}{2}$-approximation of the relative degree unless the relative degree is smaller than~$\xi/140$, in which case ${\abs{N(v)\cap Y}}/{\abs{Y}}$ is no larger than~$\xi/20$.
	\begin{lem}\label{lem:degree-2apx}
		Let $X\subseteq V$ and let $y\in\N$ with $y\le\abs{N(X)}$.
    Let $Y\subseteq N(X)$ be a uniformly-random size-$y$ subset of~$N(X)$.
		Let $v\in X$ be a vertex and write
		\begin{equation*}
			\delta(v)=\frac{\abs{\nbh{v}}}{\abs{\nbh{X}}}
			\qquad
			\text{and}
			\qquad
			\tilde\delta(v) = \frac{\abs{\nbh{v}\cap Y}}{\abs{Y}}
			\,.
		\end{equation*}
		Let $\xi>0$. 
		If $\delta(v)\ge\xi/140$, then with probability at least $1-2\exp(-\xi y/2000)$, the number $\tilde\delta(v)$ is a $\frac12$-approximation of~$\delta(v)$.
		On the other hand, if $\delta(v)\le\xi/140$, then with probability at least $1-2\exp(-\xi y/20)$, we have $\tilde\delta(v) \le\xi/20$.
  \end{lem}
  \begin{proof}
    The random variable $|N(v) \cap Y|$ follows a hypergeometric distribution with mean $\mu_v=\delta(v)\cdot y$.
		By Lemma~\ref{lem:chernoff}(i), we have
    \begin{equation*}
      \pr\paren*{\abs[\Big]{
			\abs{N(v)\cap Y}
			-\mu_v} \ge \frac{\mu_v}{2}}
			\le
      2\exp(-\mu_v/12)\,.
    \end{equation*}
		If $\delta(v)\ge\xi/140$ and thus $\mu_v\ge \xi y/140$, this immediately implies the first claim.
		Similarly, if $\delta(v)\le \xi/140$ and thus $t:=\frac\xi{20} y\ge7\mu_v$ holds, then Lemma~\ref{lem:chernoff}(ii) immediately implies the second claim.
  \end{proof}
	
	When we use Lemma~\ref{lem:degree-2apx}, we will apply it to all $O(n)$ vertices in each of the $O(\log n)$ iterations of the overall algorithm. So in order for a union bound to give something meaningful, we need a success probability of $1-\Omega(1/(n\log n))$.
	We will therefore set $y=\Theta(\xi^{-1}\log n)=\Theta(\epsilon^{-2}\log^3 n\log\log n)$.

	We can sample a uniformly random set~$Y\subseteq\nbh{X}$, using the following straightforward procedure.
	It is the only component of our algorithm that uses independence queries.

  \begin{algorr}{\sampleneighbors}{%
    The algorithm takes as input a set~$X \subseteq V$
    and an integer~$y$, and it returns a set~$Y\subseteq U$
    such that $\abs{N(X)} < y$ implies $Y=N(X)$
    and $\abs{N(X)} \ge y$ implies that~$Y$ is a uniformly random 
    size-$y$ subset of~$N(X)$.}
    \item\label{sampleneighbors:shuffle}
      Let $u_1,\dots,u_{\abs{U}}$ be a uniformly random ordering of~$U$ and let~$Y=\emptyset$. 
    \item\label{sampleneighbors:while} While $\abs{Y}<y$:
    \begin{enumerate}[label=\texttt{\alph*}] 
    \item
      Find the smallest~$i$ with~$u_i\in N(X)\setminus Y$.
      To do so, we use independence queries of the form~$\indo{G}\paren{X\cup\set{u_1,\dots,u_j}\setminus Y}$ and perform binary search over~$j\in\set{1,\dots,\abs{U}}$.
    \item
      If~$u_i$ was found, add it to~$Y$. Otherwise we have~$Y=N(X)$ and return~$Y$.
    \end{enumerate}
    \item
    Return~$Y$.
  \end{algorr}
  
  \begin{lem}\label{lem:sample-correct}
  	The algorithm \sampleneighbors{} is correct, runs in time $O(n\log n)$, and makes at most $O(y\log n)$ independence queries.
  \end{lem}
  \begin{proof}
  	The uniform ordering of $U$ induces a uniform ordering of $N(X)$, which implies that \sampleneighbors{} is correct. For the running time, note that \aref{sampleneighbors:shuffle} runs in time~$O(n)$ (using Fisher--Yates shuffling) and each binary search runs in time~$O(\log n)$. Thus the overall running time is $O(n+y\log n) = O(n\log n)$ and the number of independence queries is~$O(y\log n)$.
  \end{proof}

	We use \sampleneighbors{} for two purposes: If it returns a set~$Y$ of size less than~$y$, then~$Y=\nbh{X}$ holds and~$Y$ is small enough to compute $\eb{X}$ using the adjacency oracle for all pairs in~$Y\times X$.
	Otherwise the set~$Y$ gives us good estimates for the relative degrees of vertices in~$X$ by Lemma~\ref{lem:degree-2apx}.
	In particular, we shall use this to approximate the set of vertices in $X$ of high relative degree, as encapsulated by the following definition.

	\begin{defn}\label{def:core}
		Let $\xi\in \R$ with $0 < \xi \le 1$ and let $X \subseteq V$.
		We say $S \subseteq X$ is a \emph{$\xi$-core} of $X$ if it satisfies the following properties:
		\begin{enumerate}[label={(W\arabic*)},leftmargin=3em]
			\item every vertex in $X$ with degree at least $\frac\xi8\cdot |\nbh{X}|$ is contained in $S$;\label{item:wit-1}
			\item every vertex in $S$ has degree at least $\frac\xi{32}\cdot|\nbh{X}|$.\label{item:wit-2}
		\end{enumerate}
	\end{defn}

	We will show in the proof of Theorem~\ref{thm:finegrain} that the estimates given by Lemma~\ref{lem:degree-2apx} do indeed yield cores.
	We now relate cores to balancedness.

	\begin{lem}\label{lem:cores-work-new}
    Let $\xi\in\R$ with $0 < \xi \le 1$ and let $S$ be a $\xi$-core of a set $X\subseteq V$.
		\begin{enumerate}[label=(\roman*)]
			\item If $\abs{S}\ge 32/\xi^2$, then $X$ is $\xi$-balanced.
			\item
			If $X \setminus S$ contains a vertex of degree at least~$\frac\xi4\cdot\abs{\nbh{X\setminus S}}$, then $|\nbh{X \setminus S}| \le \frac12 \cdot|\nbh{X}|$.
			Otherwise, $X\setminus S$ is $\frac\xi4$-balanced.
		\end{enumerate}
	\end{lem}
	\begin{proof}
		For the first claim, suppose $|S| \ge 32/\xi^2$. Then by~\ref{item:wit-2}, at least~$32/\xi^2$ vertices in $X$ have degree at least $\frac\xi{32}\cdot\abs{\nbh{X}}$. Hence $\eb{X} \ge |\nbh{X}|/\xi$ holds, and every vertex $v \in X$ satisfies $d(v) \le |\nbh{X}|\le \xi \eb{X}$. Thus $X$ is $\xi$-balanced.
		
		For the second claim, suppose $v\in X\setminus S$ is a vertex whose degree satisfies~$d(v)\ge\frac\xi4\cdot\abs{\nbh{X\setminus S}}$.
		Since $v \notin S$, we also have $d(v)\le\frac\xi{8}\cdot|\nbh{X}|$ by~\ref{item:wit-1}.
		Together, these facts imply
		$
		\abs{\nbh{X\setminus S}}
		\le \frac4\xi\cdot d(v)
		\le \frac12\cdot\abs{\nbh{X}}
		$ as required.
		Finally, note that $\abs{\nbh{X\setminus S}}\le\eb{X\setminus S}$ holds, so if all vertices in~$X\setminus S$ have degree at most $\frac\xi4\cdot\abs{\nbh{X\setminus S}}$, then $X\setminus S$ is $\frac\xi4$-balanced by definition.
	\end{proof}	
	
	\subsection{The Overall Algorithm}
	Throughout this section, we will take 
		\begin{align*}
		\gamma &= \frac{\epsilon}{800\log n}\,,\qquad
		\xi = \frac{\gamma^2}{5\log \log n} = \frac{\epsilon^2}{8\cdot 10^5\log^2 n\log \log n}\,,\text{ and}\\
		y &= \frac{4000\log n}{\xi} = \frac{32\cdot 10^8\log^3 n\log \log n}{\epsilon^2}\,.
		\end{align*}	
	The edge counting algorithm works in $O(\log n)$ iterations, starting with~$X=V$.
	In each iteration, either $|X|$ is roughly halved, or $|N(X)|$ is at least halved.
	We formulate the algorithm recursively.

	\label{algo:edgecount-new}%
	\begin{algorr}{\counter{}$(X)$}{%
		This recursive algorithm takes as input a set~$X \subseteq V$ and returns an $\epsilon$-approximation to $\eb{X}$ with suitably high probability. (Recall that the input graph $G=(U,V,E)$ and the allowed error $\epsilon>0$ have already been defined globally.)
		}
		\item\label{ec:sampleneighbors}
    Use $\sampleneighbors(X,y)$ to sample a uniformly random~$Y\subseteq N(X)$ of size $\min\{y,|N(X)|\}$.
  	\item\label{ec:trivial}
		If $\abs{X}\le 24\log n$ or $\abs{Y}<y$, then compute $\eb{X}$ using adjacency queries on $U\times X$ or $Y\times X$, respectively.
		\quad\textit{(if $\abs{Y}<y$, then $Y=\nbh{X}$ holds by the properties of \sampleneighbors{})}
		\item\label{ec:delta-apx}
		For all~$v\in X$, compute $\tilde\delta(v)=\frac{\abs{N(v)\cap Y}}{\abs{Y}}$ using adjacency queries on $Y\times X$.\\
		\textit{(w.h.p.\ each $\tilde\delta(v)$ is a $\tfrac{1}{2}$-approximation to $\delta(v)$ if $\delta(v)\ge \xi/140$)}
		\item\label{ec:core}
		Let $S=\setc{v\in X}{\tilde\delta(v)\ge\frac\xi{16}}$.\quad
		\textit{(w.h.p.\ this is a $\xi$-core)}
		\item\label{ec:if}
		If $\tilde\delta(v)\le\frac12\xi$ holds for all~$v\in X$,
		or if $\abs{S}\ge 32/\xi^2$ holds:\quad
		\textit{(w.h.p. $X$ is now $\xi$-balanced)}
		\begin{enumerate}[label=\texttt{\alph*}]
			\item\label{ec:randomdelete}
			Let~$X'$ be a uniformly random subset of $X$.
			\quad\textit{(w.h.p.\  $X'$ is at most $\frac34$ the size of~$X$)}
			\item\label{ec:rec1}
			Recursively compute $2\cdot\counter(X')$, and return this number.
		\end{enumerate}
 		\item\label{ec:ow}
		 Otherwise, independently and uniformly sample $3|U|\log n/\gamma^2$ pairs from $U \times S$, and use the adjacency oracle to determine the number $Z$ of these pairs which are edges in $G$. Let $\tilde{\partial}(S) := Z\gamma^2|S| / 3\log n$. \textit{(w.h.p.\ $\tilde{\partial}(S) \in (1\pm\gamma)\eb{S}$.)}
 		\item\label{ec:done} Return $\counter(X\setminus S)+\tilde{\partial}(S)$.
		\textit{(w.h.p.\ either~$\nbh{X\setminus S}$ is half the size of~$\nbh{X}$, or $X\setminus S$ is $\xi/4$-balanced.)}
	\end{algorr}

We are ready to formally prove our main result. 

\begin{repthm}{thm:finegrain}
	\statefinegrain{}
\end{repthm}
\begin{proof}
	We may assume without loss of generality that $n \ge 10^5$; otherwise, we simply solve the problem in $O(1)$ time by brute force using the adjacency oracle. Note that each iteration of \counter{} makes at most one recursive call, so its recursion tree is a path. An \emph{iteration} is an execution of \counter{} up to a recursive call. We first make a minor modification to \counter{}: adding a global counter to ensure that we perform at most $t=\floor{100\log n}$ iterations, otherwise halting with an output of TIMEOUT. We are very unlikely to reach this depth, but this modification will allow us to bound the running time deterministically (as required by Theorem~\ref{thm:finegrain}). Having done so, we claim that running \counter{} on input~$V$ has the claimed properties. 
	
	We first bound the running time for each iteration.
	By Lemma~\ref{lem:sample-correct}, \aref{ec:sampleneighbors} runs in time $O(n\log n)$ and makes at most $O(y \log n)$ independence queries; this step is the only one that makes independence queries at all.
	\aref{ec:trivial} takes time at most $O(n\log n)$ if $\abs{X}\le 24\log n$ or time $O(yn)$ otherwise.
	Likewise, not counting the recursive calls, \aref{ec:delta-apx}, \aref{ec:core}, and \aref{ec:if} take time $O(yn)$, and \aref{ec:ow} and \aref{ec:done} take time $O(n\log n/\gamma^2) = \epsilon^{-2}O(n\log^3 n)$. There are $O(\log n)$ total iterations, and $y=\epsilon^{-2}\Theta(\log^3n\log\log n)$, so the overall worst-case running time of the algorithm on input~$V$ is $O(yn\log n) = \epsilon^{-2}O(n\log^4 n\log\log n)$, and it makes at most $O(y\log^2 n) = \epsilon^{-2}O(\log^5 n\log\log n)$ queries to the independence oracle.
	
	\newcommand{\calF}{\mathcal{F}}
	Next, we argue that the success probability is at least $2/3$.
	To reason about this, we define the following events at each recursion depth~$1 \le i \le t$ of the algorithm:
	\begin{enumerate}
		\item[$\calF_1(i)$] Either \aref{ec:delta-apx} is not executed at depth $i$, or each $\tilde\delta(v)$ computed indeed either $\tfrac{1}{2}$-approximates $\delta(v)$ (if $\delta(v)\ge\xi/140$) or satisfies $\tilde\delta(v)\le\xi/20$ (otherwise).
		\item[$\calF_2(i)$] Either \aref{ec:if}\ref{ec:randomdelete} is not executed at depth $i$, or $\abs{X'}\le\frac34\abs{X}$ holds and the number $2\eb{X'}$ is a $2\gamma$-approximation of $\eb{X}$.
		\item[$\calF_3(i)$] Either \aref{ec:ow} is not executed at depth $i$, or $\tilde{\partial}(S)$ is a $\gamma$-approximation to $\eb{S}$.
	\end{enumerate}
	Thus $\calF_1(i)$, $\calF_2(i)$ and $\calF_3(i)$ vacuously occur if the algorithm terminates before reaching depth $i$. We write $\calF(i) = \calF_1(i) \cap \calF_2(i) \cap \calF_3(i)$, and $\calF = \bigcap_{i=1}^t \calF(i)$. We will now show that $\Pr(\calF) \ge 2/3$.
	
	Each time \aref{ec:delta-apx} is executed, the set~$Y$ returned by \sampleneighbors{} in \aref{ec:sampleneighbors} has size $y=\abs{Y}\le\abs{\nbh{X}}$, and thus this set is a uniformly random size-$y$ subset of~$\nbh{X}$.
	Lemma~\ref{lem:degree-2apx} applies and shows that each event $\calF_1(i)$ fails to occur for an individual~$v$ with probability at most $\exp(-\xi y/2000)$. By our choice of $y$, this is precisely $1/n^2$.
	Since there are at most~$n$ vertices~$v$,
	\begin{equation}\label{eqn:finegrain-F1}
		\Pr(\calF_1(i)\mbox{ fails}) \le 1/n.
	\end{equation}
	
	Conditioned on $\calF_1(i)$, we claim that the set $S$ defined in \aref{ec:core} is a $\xi$-core. If $\delta(v)\ge\xi/8$, then $\tilde{\delta}(v)$ is a valid $\tfrac{1}{2}$-approximation to $\delta(v)$, so $\tilde\delta(v)\ge\xi/16$ and thus $v$ is added to~$S$; this implies that \ref{item:wit-1} holds. Conversely, if $\delta(v)<\xi/32$, then either $\tilde{\delta}(v)$ is a $\tfrac{1}{2}$-approximation of $\delta(v)$ (in which case $\tilde\delta(v)<\xi/16$ and thus~$v$ is not added to~$S$) or $\tilde{\delta}(v) \le \xi/20$ (in which case again $v$ is not added to~$S$); this implies that \ref{item:wit-2} holds.
	
	We now claim that if \aref{ec:if}\ref{ec:randomdelete} is executed, again conditioned on $\calF_1(i)$, then $X$ is $\xi$-balanced. Suppose \aref{ec:if}\ref{ec:randomdelete} is executed; therefore either $\tilde\delta(v)\le\frac12\xi$ holds for all $v\in X$ or $\abs{S}\ge 32/\xi^2$. If $\abs{S}\ge 32/\xi^2$, then $X$ is $\xi$-balanced by Lemma~\ref{lem:cores-work-new}(i), so suppose $\tilde\delta(v)\le\frac12\xi$ for all $v\in X$. Since $\calF_1(i)$ occurs, for all $v \in X$, either $\tilde{\delta}(v)$ is a $\tfrac{1}{2}$-approximation for $\delta(v)$ or $\delta(v) < \xi/140$. In the former case, $\delta(v) \le 2\tilde{\delta}(v) \le \xi$, so $\delta(v) \le \xi$ in both cases and so $X$ is $\xi$-balanced as claimed.
	
	It follows that conditioned on $\calF_1(i)$, each time \aref{ec:if}\ref{ec:randomdelete} is executed, $\abs{X}\ge 24\log n$ and $X$ is $\xi$-balanced.
	Thus Lemma~\ref{lem:balanced-halve}(i) and (ii) apply, so $\calF_2(i)$ fails with probability at most $2\exp(-\abs{X}/24)+2\exp(-2\gamma^2/\xi)$. By our choice of $\xi$, it follows that
	\begin{equation}\label{eqn:finegrain-F2}
		\Pr(\calF_2(i)\mbox{ fails} \mid \calF_1(i)) \le \frac{2}{n} + \frac{2}{\log^{10} n}.
	\end{equation}

	Finally, conditioned on $\calF_1(i)$, each time \aref{ec:ow} is executed, $Z$ is a binomial variable with mean $\mu = 3\eb{S}\log n/\gamma^2|S|$. It follows by Lemma~\ref{lem:chernoff}(i) that for all $i$,
	\begin{align*}
		\Pr(\calF_3(i)\mbox{ fails}\mid \calF_1(i)) &= \Pr(|\tilde{\partial}(S) - \partial(S)| > \gamma\partial(S)) = \Pr(|Z-\mu| > \gamma\mu)\\
		&\le 2e^{-\gamma^2\mu/3} = 2e^{-\eb{S}\log n/|S|}.
	\end{align*}
	Since $\calF_1(i)$ occurs, $S$ is a $\xi$-core (as shown above); thus by (W2), every vertex in $S$ has positive degree, and in particular $\eb{S} \ge |S|$. Thus conditioned on $\calF_1(i)$, $\calF_3(i)$ fails with probability at most $2/n$. In conjunction with~\eqref{eqn:finegrain-F1} and~\eqref{eqn:finegrain-F2}, this implies
	\[
		\Pr(\calF(i)\mbox{ fails}) \le \frac{5}{n} + \frac{2}{\log^{10} n}.
	\]
	Since $n \ge 10^5$ and $t \le 100\log n$, this is at most $1/3t$. It follows by a union bound over all $1\le i\le t$ that $\calF$ occurs with probability at least $2/3$, as claimed.
	
	Let us now show that conditioned on $\calF$, we do not output TIMEOUT.
	We claim that in every other iteration, we multiply either $\abs{\nbh{X}}$ or $\abs{X}$ by a factor of at most~$\frac34$.
	Since $\calF_2(i)$ occurs for all $i$, it is clear that $\abs{X}$ is multiplied by a factor of at most $\frac34$ if the algorithm recurses in \aref{ec:if}\ref{ec:rec1}.
	If the algorithm recurses in \aref{ec:done}, then by Lemma~\ref{lem:cores-work-new}(ii), either we reduce $\abs{\nbh{X}}$ by at least half, or the set $X\setminus S$ is $\xi/4$-balanced. In the first case we are done, in the second case it may be that $X\setminus S$ is not significantly smaller than~$X$.
	However, as $X\setminus S$ is $\xi/4$-balanced, the condition $\tilde\delta(v)\le\xi/2$ is met for all $v\in X\setminus S$ in the very next iteration of the algorithm (where the input is $X\setminus S$), and then $X\setminus S$ is multiplied by a factor of at most $\frac34$.
	Since initially we have $\abs{X}\le n$ and $\abs{\nbh{X}}\le n$, the number of iterations is thus at most $4\log_{\frac43} n< t$ as required.
	
	It remains to prove that conditioned on $\calF$, the function call $\counter(V)$ returns an $\epsilon$-approxi\-ma\-tion for $\abs{E(G)} = \eb{V}$. Let $t'\le t$ be the total number of iterations; we will prove inductively that for all $0 \le i \le t'-1$, we have $\counter(X_{t'-i}) \in (1\pm 2\gamma)^i\eb{X_{t'-i}}$. In the last iteration, the algorithm computes $\eb{X_{t'}}$ exactly, so the claim is immediate for $i=0$. If the algorithm in iteration $t'-i$ recurses in \aref{ec:if}\ref{ec:rec1}, then since $\calF_2(t'-i)$ occurs, we have
	\begin{align*}
		\counter(X_{t'-i}) &= 2\cdot\counter(X_{t'-i+1}) \in (1\pm 2\gamma)^{i-1}\eb{X_{t'-i+1}} \subseteq (1\pm 2\gamma)^i\eb{X_{t'-i}},
	\end{align*}
	as required. If instead it recurses in \aref{ec:done}, then since $\calF_3(t'-i)$ occurs, we have
	\begin{align*}
		\counter(X_{t'-i}) &= \counter(X_{t'-i+1}) + \tilde{\partial}(S)\\
		&\in (1\pm 2\gamma)^{i-1}\eb{X_{t'-i+1}} + (1\pm \gamma)\eb{S}\\
		&\subseteq (1\pm 2\gamma)^i\big(\eb{X_{t'-i+1}} + \eb{X_{t'-i+1}\setminus X_{t'-i}}\big) = (1\pm 2\gamma)^i\eb{X_{t'-i}}.
	\end{align*}
	Thus the claim holds, and in particular 
	\[
		\counter(V) = \counter(X_1) \in (1\pm 2\gamma)^{t'-1}\eb{X_1} \subseteq (1\pm 2\gamma)^te(G).
	\]
	Since $(1-2\gamma)^{t} \ge 1-2t\gamma$ and $(1+2\gamma)^{t} \le e^{2\gamma t} \le 1+8t\gamma$, it follows that $\counter(V)$ is a $8t\gamma$-approximation of $\abs{E(G)}$.
	Since $t \le 100\log n$, by our choice of $\gamma$, this is an $\epsilon$-approximation.
\end{proof}

	\section{Applications for polynomial-time problems}\label{sec:applications}

	\subsection{3SUM}\label{sec:3sum-proofs}
	
	We formally define the problems as follows.
	
	\defproblem{\dtsum}{Three lists $A$, $B$ and $C$ of integers.}
	{Decide whether there exists a tuple $(a,b,c) \in A\times B\times C$ such that $a+b=c$.}

	\defproblem{\ctsum}{Three lists $A$, $B$ and $C$ of integers.}
	{Count the number of tuples $(a,b,c) \in A\times B\times C$ such that $a+b=c$.}

  \begin{repthm}{thm:tsum}
		\statetsum{}
	\end{repthm}
	\begin{proof}
	First we note that any bounded-error randomised algorithm for \dtsum\ must read a constant proportion of the entries in $A$, $B$ and $C$, so we can assume $T(n) = \Omega(n)$.

	Let $(A,B,C)$ be an instance of \ctsum\ and let $0<\epsilon<1$. If $\epsilon \le \tfrac{1}{n}$, then we use exhaustive search to solve the problem exactly in time $O(n^3) = O(\epsilon^{-2}T(n))$.
	In the following, we assume $\epsilon > \tfrac{1}{n}$. Let $E = \setc{(a,b) \in A \times B }{ a+b \in C}$, and let $G = (A,B,E)$. We will proceed by sorting the set~$C$ in $O(n\log n)$ time, then applying the algorithm of Theorem~\ref{thm:finegrain} to $G$ and $\epsilon$.
	
	We can evaluate $\adjo{G}(a,b)$ in time $O(\log n)$ using binary search on~$C$. Moreover, for all $X \subseteq A \cup B$, we have $\indo{G}(X) = 1$ if and only if $(X\cap A, X\cap B,C)$ is a `no' instance of \dtsum, so $\indo{G}$ can be evaluated by solving a single instance of \dtsum{}, which takes $O(n)$ time to prepare. As in the proof of Theorem~\ref{thm:growthrate}, we solve the instance by invoking the assumed randomised decision algorithm $100\log n$ times and outputting the majority answer.
	The overall algorithm is given by Theorem~\ref{thm:finegrain}. As this algorithm makes at most $\epsilon^{-2} \cdot O(\log^6 n)\le O(n^2\log^6 n)$ queries to $\indo{G}$, the probability that at least one of them is answered incorrectly by the boosted randomised procedure remains negligible, at most~$O(1/n)$ by Lemma~\ref{lem:chernoff}(i), which is in particular at most $1/3$ as required.
	The overall running time is:
	\begin{align*}
		\underbracket{O(n\log n)}_{\text{sort $C$}}
		+
		\underbracket{%
		\epsilon^{-2} O(n\log^4n\log\log n)
		}_{\text{\# queries to $\adjo{G}$}}
		\cdot
		\underbracket{%
		O(\log n)
		}_{\text{binary search}}
		+
		\underbracket{%
		\epsilon^{-2} O(\log^5n\log\log n)
		}_{\text{\# queries to $\indo{G}$}}
		\cdot
		\underbracket{%
		(
		O(n)
		+
		T(n)\log n
		)
		}_{\text{prepare and solve query}}
		\,.
	\end{align*}
	We have constructed an $\epsilon$-approximation algorithm for $\dtsum{}$ that has the claimed running time.
	\end{proof}
	
	\begin{repthm}{thm:tsum-algo}
		\statetsumalgo{}
	\end{repthm}
	\begin{proof}
    Say a set $S \subseteq \Z$ is \emph{$(n,\delta)$-clustered} if it can be covered by at most $n^{1-\delta}$ intervals of length~$n$; note that it can be checked in quasilinear time whether a set is $(n,\delta)$-clustered. Let $(A,B,C)$ be an instance of \ctsum\ in which at least one of $A$, $B$ or $C$ is $(n,\delta)$-clustered. By negating and permuting sets if necessary, we may assume that $C$ is $(n,\delta)$-clustered. Exactly as in the proof of Theorem~\ref{thm:tsum}, any randomised $T(n)$-time algorithm for \dtsum\ on such instances yields a $T(n)\cdot \epsilon^{-2}O(\log^6 n\log \log n)$-time randomised approximation scheme. (In particular, note that $(X \cap A, X\cap B, C)$ remains an instance of the restricted problem.) Chan and Lewenstein~\cite[Corollary~4.3]{CL-3SUM} provide a randomised $O(n^{2-\delta/7})$-time algorithm for \dtsum\ on such instances, so the result follows.
	\end{proof}
	
	\subsection{Orthogonal Vectors}\label{sec:ov-proofs}
	
	We formally define the problems as follows.
	
	\defproblem{\dov}{Two lists $A$ and $B$ of zero-one vectors in $\R^d$.}
	{Decide whether there exists a pair $(\vec{u},\vec{v}) \in A \times B$ such that $\sum_{i=1}^d \vec{u}_i\vec{v}_i = 0$.}

	\defproblem{\cov}{Two lists $A$ and $B$ of zero-one vectors in $\R^d$.}
	{Count the number of pairs $(\vec{u},\vec{v}) \in A \times B$ such that $\sum_{i=1}^d \vec{u}_i\vec{v}_i = 0$.}

  \begin{repthm}{thm:ov}
    \stateov{}
  \end{repthm}
	\begin{proof}
	Let $(A,B)$ be an instance of \cov\ and let $0<\epsilon<1$. If $\epsilon \le n^{-2}$ then we can solve the problem exactly in time $O(n^2) = O(\epsilon^{-1})$, so suppose $\epsilon > n^{-2}$. Let $E = \setc{(a,b) \in A \times B }{ \langle a,\,b \rangle = 0}$, and let $G = (A,B,E)$ be a bipartite graph. We will proceed by applying the algorithm of Theorem~\ref{thm:finegrain} to $G$ and $\epsilon$.
	
	We can evaluate $\adjo{G}$ in $O(d)$ time by calculating the inner product. Moreover, for all $X \subseteq A \cup B$, $\indo{G}(X) = 1$ if and only if $(A \cap X,B\cap X)$ is a `no' instance of \dov, so $\indo{G}$ can be evaluated by solving a single instance of \dov\ which takes $O(nd)$ time to prepare. As in the proof of Theorem~\ref{thm:tsum}, we do so by invoking our randomised decision algorithm $100\log n$ times and outputting the majority answer. Our overall running time is then
	\[
		\epsilon^{-2}\cdot O(n\log^4n\log\log n)\cdot O(d) + \epsilon^{-2}\cdot (O(nd)+T(n,d)\log n)\cdot O(\log^5n\log\log n).
	\]
	Since any randomised algorithm for \dov\ must examine a constant proportion of the coordinates of vectors in $A$ and $B$, we have $T(n,d) = \Omega(nd)$, so the result follows.
	\end{proof}
	
	In the following definitions, $\mathcal{R}$ is a constant finite ring.
	
	\defproblem{\dov($\mathcal{R}$)}{Two lists $A$ and $B$ of vectors in $\mathcal{R}^d$.}
	{Decide whether there exists a pair $(\vec{u},\vec{v}) \in A \times B$ such that $\sum_{i=1}^d \vec{u}_i\vec{v}_i = 0_{\mathcal{R}}$.}

	\defproblem{\cov($\mathcal{R}$)}{Two lists $A$ and $B$ of vectors in $\mathcal{R}^d$.}
	{Count the number of pairs $(\vec{u},\vec{v}) \in A \times B$ such that $\sum_{i=1}^d \vec{u}_i\vec{v}_i = 0_{\mathcal{R}}$.}

  \begin{repthm}{thm:ov-algo}
    \stateovalgo{}
  \end{repthm}
  \begin{proof}
		Exactly as in the proof of Theorem~\ref{thm:ov}, any randomised $T(n,d)$-time algorithm for \dov($\mathcal{R}$) yields a $T(n,d)\cdot \epsilon^{-2}O(\log^6 n\log \log n)$-time randomised approxi\-mation scheme for \cov($\mathcal{R}$). (Note that $\mathcal{R}$ is finite and part of the problem specification, so arithmetic operations require only $O(1)$ time.) The result therefore follows from Theorems 1.6 and 1.3 (respectively) of Williams and Yu~\cite{WY-OV-algo}.
	\end{proof}
		
	\subsection{Negative-Weight Triangles}\label{sec:nwt-proofs}
	
	We formally define the problems as follows.
	
	\defproblem{\dnwt}{A tripartite graph $G$ and a symmetric function $w:V(G)^2\rightarrow\mathbb{Z}$.}
	{Decide whether there exists a triangle $abc$ in $G$ such that $w(a,b) + w(b,c) + w(c,a) < 0$.}

	\defproblem{\cnwt}{A tripartite graph $G$ and a symmetric function $w:V(G)^2\rightarrow\mathbb{Z}$.}
	{Count the number of triangles $abc$ in $G$ such that $w(a,b) + w(b,c) + w(c,a) < 0$.}

  \begin{repthm}{thm:nwt}
    \statenwt{}
  \end{repthm}
	\begin{proof}
	Let $(G,w)$ be an instance of \cnwt, let $A$, $B$ and $C$ be the vertex classes of $G$, and let $0<\epsilon<1$. If $\epsilon \le n^{-3}$ then we can solve the problem exactly in time $O(n^3) = O(\epsilon^{-1})$, so suppose $\epsilon > n^{-3}$. Let $U = A$, let $V = \setc{e \in E(G) }{ e \subseteq B \cup C}$, and~let 
	\[
		E = \setc[\Big]{(a,\{b,c\}) \in U \times V }{ \{a,b\},\{a,c\} \in E(G)\mbox{ and }w(a,b)+w(b,c)+w(c,a) < 0}.
	\]
	Let $H = (U,V,E)$, so that $H$ is a bipartite graph. We will proceed by applying the algorithm of Theorem~\ref{thm:finegrain} to $H$ and $\epsilon$. 
	
	We can evaluate $\adjo{H}$ in $O(1)$ time by summing the appropriate weights. Moreover, for all $X \subseteq U \cup V$, define a graph $G_X$ by $V(G_X) = (X \cap A) \cup B \cup C$ and 
	\[
		E(G_X) = \setc[\Big]{e \in E(G) }{ e \cap X \cap A \ne \emptyset \mbox{ or } e \in X \cap V}\,.
	\]
	Let $w_X = w|_{V(G_X)^2}$. Then for all $X \subseteq U \cup V$, $\indo{H}(X)=1$ if and only if $(G_X, w_X)$ is a `no' instance of \dnwt, so $\indo{G}$ can be evaluated by solving a single instance of \dnwt\ which takes $O(n^2)$ time to prepare. As in the proof of Theorem~\ref{thm:tsum}, we do so by invoking our randomised decision algorithm $100\log n$ times and outputting the majority answer. Our overall running time is then
	\[
		\epsilon^{-2}\cdot O(n^2\log^4n\log\log n)\cdot O(1) + \epsilon^{-2}\cdot (O(n^2)+T(n)\log n)\cdot O(\log^5n\log\log n).
	\]
	If $G$ is a complete tripartite graph, then any randomised algorithm for \dnwt\ must examine a constant proportion of the edges of $G$, so we have $T(n) = \Omega(n^2)$ and the result~follows.	
	\end{proof}
	
	In order to approximate algorithm for \#NWT, we will reduce to APSP and apply the algorithm of Williams~\cite{Williams-APSP}. We formally define APSP as follows.
	
	\defproblem{APSP}{A directed graph $G$ and a function $w:E(G)\rightarrow\Z$ such that $G$ contains no negative-weight cycles under $w$.}{Output the matrix $A$ such that for all $u,v\in V(G)$, $A_{u,v}$ is the minimum weight of any path from $u$ to $v$ in $G$.}
  \begin{repthm}{thm:nwt-algo}
    \statenwtalgo{}
	\end{repthm}
	\begin{proof}
		By Williams~\cite[Theorem 1.1]{Williams-APSP}, an $n$-vertex instance of APSP with polynomially bounded edge weights can be solved in time $n^3/e^{\Omega(\sqrt{\log n})}$. There is a well-known reduction from \dnwt\ to APSP with only constant overhead, which we give explicitly in the following paragraph. Theorem~\ref{thm:nwt} then implies the existence of an $\epsilon$-approximation algorithm for \#NWT with running time $\epsilon^{-2}n^3/e^{\Omega(\sqrt{\log n})}$, noting that the polylogarithmic overhead is subsumed into the $e^{\Omega(\sqrt{\log n})}$~term.
		
		It remains only to reduce NWT to APSP. Let $(G,w)$ be an instance of \dnwt, writing $G=(V,E)$. Form an instance $(G',w')$ of APSP as follows. Let $V(G') = (V \times [3])$, and let 
		\[
			E(G') = \bigcup_{i\in\{1,2\}} \bigcup_{\{u,v\} \in E} \{((u,i),(v,i+1)), ((v,i),(u,i+1))\}.
		\]
		Let $w'(\{(u,i),(v,i+1)\}) = w(u,v)$ for all $\{(u,i),(v,i+1)\} \in E(G')$. Thus for all $\{u,v\} \in E$, each path $(u,1)(w,2)(v,3)$ from $(u,1)$ to $(v,3)$ in $G'$ corresponds exactly to the triangle $uvw$ in $G$, and $uvw$'s weight is the length of the corresponding path plus $w(u,v)$. Let~$A$ be the output of APSP on $G'$. Then from the discussion above, $(G,w)$ is a `yes' instance of \dnwt\ if and only if for some $\{u,v\} \in E(G)$, we have $A_{(u,1),(v,3)} + w(u,v) < 0$.
		This can be checked in $O(n^2)$ time.
	\end{proof}

\begin{acks}
  We thank Rahul Santhanam and Ryan Williams for some valuable discussions.
    
  Part of this work was done while the authors were visiting the Simons Institute for the Theory of Computing. The research leading to these results has received funding from the \grantsponsor{}{European Research Council (ERC)}{} under the European Union's Seventh Framework Programme (FP7/2007--2013) ERC grant agreement no.~\grantnum{}{334828}. The paper reflects only the authors' views and not the views of the ERC or the European Commission. The European Union is not liable for any use that may be made of the information contained therein.
\end{acks}

  \bibliographystyle{ACM-Reference-Format}
	\bibliography{approximation-to-decision}

%%% -*-BibTeX-*-
%%% Do NOT edit. File created by BibTeX with style
%%% ACM-Reference-Format-Journals [18-Jan-2012].

\begin{thebibliography}{39}

%%% ====================================================================
%%% NOTE TO THE USER: you can override these defaults by providing
%%% customized versions of any of these macros before the \bibliography
%%% command.  Each of them MUST provide its own final punctuation,
%%% except for \shownote{}, \showDOI{}, and \showURL{}.  The latter two
%%% do not use final punctuation, in order to avoid confusing it with
%%% the Web address.
%%%
%%% To suppress output of a particular field, define its macro to expand
%%% to an empty string, or better, \unskip, like this:
%%%
%%% \newcommand{\showDOI}[1]{\unskip}   % LaTeX syntax
%%%
%%% \def \showDOI #1{\unskip}           % plain TeX syntax
%%%
%%% ====================================================================

\ifx \showCODEN    \undefined \def \showCODEN     #1{\unskip}     \fi
\ifx \showDOI      \undefined \def \showDOI       #1{#1}\fi
\ifx \showISBNx    \undefined \def \showISBNx     #1{\unskip}     \fi
\ifx \showISBNxiii \undefined \def \showISBNxiii  #1{\unskip}     \fi
\ifx \showISSN     \undefined \def \showISSN      #1{\unskip}     \fi
\ifx \showLCCN     \undefined \def \showLCCN      #1{\unskip}     \fi
\ifx \shownote     \undefined \def \shownote      #1{#1}          \fi
\ifx \showarticletitle \undefined \def \showarticletitle #1{#1}   \fi
\ifx \showURL      \undefined \def \showURL       {\relax}        \fi
% The following commands are used for tagged output and should be
% invisible to TeX
\providecommand\bibfield[2]{#2}
\providecommand\bibinfo[2]{#2}
\providecommand\natexlab[1]{#1}
\providecommand\showeprint[2][]{arXiv:#2}

\bibitem[\protect\citeauthoryear{Abboud, Williams, and Yu}{Abboud
  et~al\mbox{.}}{2015}]%
        {AWY15}
\bibfield{author}{\bibinfo{person}{Amir Abboud}, \bibinfo{person}{Richard~Ryan
  Williams}, {and} \bibinfo{person}{Huacheng Yu}.}
  \bibinfo{year}{2015}\natexlab{}.
\newblock \showarticletitle{More Applications of the Polynomial Method to
  Algorithm Design}. In \bibinfo{booktitle}{\emph{Proceedings of the
  Twenty-Sixth Annual {ACM-SIAM} Symposium on Discrete Algorithms, {SODA} 2015,
  San Diego, CA, USA, January 4-6, 2015}},
  \bibfield{editor}{\bibinfo{person}{Piotr Indyk}} (Ed.).
  \bibinfo{publisher}{{SIAM}}, \bibinfo{pages}{218--230}.
\newblock
\urldef\tempurl%
\url{https://doi.org/10.1137/1.9781611973730.17}
\showDOI{\tempurl}


\bibitem[\protect\citeauthoryear{Baran, Demaine, and Patrascu}{Baran
  et~al\mbox{.}}{2008}]%
        {Baran2008}
\bibfield{author}{\bibinfo{person}{Ilya Baran}, \bibinfo{person}{Erik~D.
  Demaine}, {and} \bibinfo{person}{Mihai Patrascu}.}
  \bibinfo{year}{2008}\natexlab{}.
\newblock \showarticletitle{Subquadratic Algorithms for {3SUM}}.
\newblock \bibinfo{journal}{\emph{Algorithmica}} \bibinfo{volume}{50},
  \bibinfo{number}{4} (\bibinfo{year}{2008}), \bibinfo{pages}{584--596}.
\newblock
\urldef\tempurl%
\url{https://doi.org/10.1007/s00453-007-9036-3}
\showDOI{\tempurl}


\bibitem[\protect\citeauthoryear{Beame, Har{-}Peled, Ramamoorthy, Rashtchian,
  and Sinha}{Beame et~al\mbox{.}}{2018}]%
        {new-oracle}
\bibfield{author}{\bibinfo{person}{Paul Beame}, \bibinfo{person}{Sariel
  Har{-}Peled}, \bibinfo{person}{Sivaramakrishnan~Natarajan Ramamoorthy},
  \bibinfo{person}{Cyrus Rashtchian}, {and} \bibinfo{person}{Makrand Sinha}.}
  \bibinfo{year}{2018}\natexlab{}.
\newblock \showarticletitle{Edge Estimation with Independent Set Oracles}. In
  \bibinfo{booktitle}{\emph{9th Innovations in Theoretical Computer Science
  Conference, {ITCS} 2018, January 11-14, 2018, Cambridge, MA, {USA}}}
  \emph{(\bibinfo{series}{LIPIcs}, Vol.~\bibinfo{volume}{94})},
  \bibfield{editor}{\bibinfo{person}{Anna~R. Karlin}} (Ed.).
  \bibinfo{publisher}{Schloss Dagstuhl - Leibniz-Zentrum f{\"{u}}r Informatik},
  \bibinfo{pages}{38:1--38:21}.
\newblock
\urldef\tempurl%
\url{https://doi.org/10.4230/LIPIcs.ITCS.2018.38}
\showDOI{\tempurl}


\bibitem[\protect\citeauthoryear{Bhattacharya, Bishnu, Ghosh, and
  Mishra}{Bhattacharya et~al\mbox{.}}{2018}]%
        {DBLP:journals/corr/abs-1808-00691}
\bibfield{author}{\bibinfo{person}{Anup Bhattacharya}, \bibinfo{person}{Arijit
  Bishnu}, \bibinfo{person}{Arijit Ghosh}, {and} \bibinfo{person}{Gopinath
  Mishra}.} \bibinfo{year}{2018}\natexlab{}.
\newblock \showarticletitle{Triangle Estimation using Polylogarithmic Queries}.
\newblock \bibinfo{journal}{\emph{CoRR}}  \bibinfo{volume}{abs/1808.00691}
  (\bibinfo{year}{2018}).
\newblock
\showeprint[arxiv]{1808.00691}
\urldef\tempurl%
\url{http://arxiv.org/abs/1808.00691}
\showURL{%
\tempurl}


\bibitem[\protect\citeauthoryear{Bhattacharya, Bishnu, Ghosh, and
  Mishra}{Bhattacharya et~al\mbox{.}}{2019}]%
        {DBLP:journals/corr/abs-1908-04196}
\bibfield{author}{\bibinfo{person}{Anup Bhattacharya}, \bibinfo{person}{Arijit
  Bishnu}, \bibinfo{person}{Arijit Ghosh}, {and} \bibinfo{person}{Gopinath
  Mishra}.} \bibinfo{year}{2019}\natexlab{}.
\newblock \showarticletitle{Hyperedge Estimation using Polylogarithmic Subset
  Queries}.
\newblock \bibinfo{journal}{\emph{CoRR}}  \bibinfo{volume}{abs/1908.04196}
  (\bibinfo{year}{2019}).
\newblock
\showeprint[arxiv]{1908.04196}
\urldef\tempurl%
\url{http://arxiv.org/abs/1908.04196}
\showURL{%
\tempurl}


\bibitem[\protect\citeauthoryear{Bishnu, Ghosh, Kolay, Mishra, and
  Saurabh}{Bishnu et~al\mbox{.}}{2018}]%
        {DBLP:conf/isaac/BishnuGKM018}
\bibfield{author}{\bibinfo{person}{Arijit Bishnu}, \bibinfo{person}{Arijit
  Ghosh}, \bibinfo{person}{Sudeshna Kolay}, \bibinfo{person}{Gopinath Mishra},
  {and} \bibinfo{person}{Saket Saurabh}.} \bibinfo{year}{2018}\natexlab{}.
\newblock \showarticletitle{Parameterized Query Complexity of Hitting Set Using
  Stability of Sunflowers}. In \bibinfo{booktitle}{\emph{29th International
  Symposium on Algorithms and Computation, {ISAAC} 2018, December 16-19, 2018,
  Jiaoxi, Yilan, Taiwan}} \emph{(\bibinfo{series}{LIPIcs},
  Vol.~\bibinfo{volume}{123})}, \bibfield{editor}{\bibinfo{person}{Wen{-}Lian
  Hsu}, \bibinfo{person}{Der{-}Tsai Lee}, {and} \bibinfo{person}{Chung{-}Shou
  Liao}} (Eds.). \bibinfo{publisher}{Schloss Dagstuhl - Leibniz-Zentrum
  f{\"{u}}r Informatik}, \bibinfo{pages}{25:1--25:12}.
\newblock
\urldef\tempurl%
\url{https://doi.org/10.4230/LIPIcs.ISAAC.2018.25}
\showDOI{\tempurl}


\bibitem[\protect\citeauthoryear{Calabro, Impagliazzo, Kabanets, and
  Paturi}{Calabro et~al\mbox{.}}{2008}]%
        {CIKP}
\bibfield{author}{\bibinfo{person}{Chris Calabro}, \bibinfo{person}{Russell
  Impagliazzo}, \bibinfo{person}{Valentine Kabanets}, {and}
  \bibinfo{person}{Ramamohan Paturi}.} \bibinfo{year}{2008}\natexlab{}.
\newblock \showarticletitle{The complexity of Unique k-SAT: An Isolation Lemma
  for k-CNFs}.
\newblock \bibinfo{journal}{\emph{J. Comput. Syst. Sci.}} \bibinfo{volume}{74},
  \bibinfo{number}{3} (\bibinfo{year}{2008}), \bibinfo{pages}{386--393}.
\newblock
\urldef\tempurl%
\url{https://doi.org/10.1016/j.jcss.2007.06.015}
\showDOI{\tempurl}


\bibitem[\protect\citeauthoryear{Chan and Lewenstein}{Chan and
  Lewenstein}{2015}]%
        {CL-3SUM}
\bibfield{author}{\bibinfo{person}{Timothy~M. Chan} {and}
  \bibinfo{person}{Moshe Lewenstein}.} \bibinfo{year}{2015}\natexlab{}.
\newblock \showarticletitle{Clustered Integer {3SUM} via Additive
  Combinatorics}. In \bibinfo{booktitle}{\emph{Proceedings of the Forty-Seventh
  Annual {ACM} on Symposium on Theory of Computing, {STOC} 2015, Portland, OR,
  USA, June 14-17, 2015}}, \bibfield{editor}{\bibinfo{person}{Rocco~A.
  Servedio} {and} \bibinfo{person}{Ronitt Rubinfeld}} (Eds.).
  \bibinfo{publisher}{{ACM}}, \bibinfo{pages}{31--40}.
\newblock
\urldef\tempurl%
\url{https://doi.org/10.1145/2746539.2746568}
\showDOI{\tempurl}


\bibitem[\protect\citeauthoryear{Chan and Williams}{Chan and Williams}{2016}]%
        {CW-Counting}
\bibfield{author}{\bibinfo{person}{Timothy~M. Chan} {and}
  \bibinfo{person}{Richard~Ryan Williams}.} \bibinfo{year}{2016}\natexlab{}.
\newblock \showarticletitle{Deterministic {APSP}, Orthogonal Vectors, and More:
  Quickly Derandomizing {Razborov-Smolensky}}. In
  \bibinfo{booktitle}{\emph{Proceedings of the Twenty-Seventh Annual {ACM-SIAM}
  Symposium on Discrete Algorithms, {SODA} 2016, Arlington, VA, USA, January
  10-12, 2016}}, \bibfield{editor}{\bibinfo{person}{Robert Krauthgamer}} (Ed.).
  \bibinfo{publisher}{{SIAM}}, \bibinfo{pages}{1246--1255}.
\newblock
\urldef\tempurl%
\url{https://doi.org/10.1137/1.9781611974331.ch87}
\showDOI{\tempurl}


\bibitem[\protect\citeauthoryear{Chen, Levi, and Waingarten}{Chen
  et~al\mbox{.}}{2020}]%
        {DBLP:conf/soda/0001LW20}
\bibfield{author}{\bibinfo{person}{Xi Chen}, \bibinfo{person}{Amit Levi}, {and}
  \bibinfo{person}{Erik Waingarten}.} \bibinfo{year}{2020}\natexlab{}.
\newblock \showarticletitle{Nearly optimal edge estimation with independent set
  queries}. In \bibinfo{booktitle}{\emph{Proceedings of the 2020 {ACM-SIAM}
  Symposium on Discrete Algorithms, {SODA} 2020, Salt Lake City, UT, USA,
  January 5-8, 2020}}, \bibfield{editor}{\bibinfo{person}{Shuchi Chawla}}
  (Ed.). \bibinfo{publisher}{{SIAM}}, \bibinfo{pages}{2916--2935}.
\newblock
\urldef\tempurl%
\url{https://doi.org/10.1137/1.9781611975994.177}
\showDOI{\tempurl}


\bibitem[\protect\citeauthoryear{Dell, Lapinskas, and Meeks}{Dell
  et~al\mbox{.}}{2020}]%
        {DBLP:conf/soda/DellLM20}
\bibfield{author}{\bibinfo{person}{Holger Dell}, \bibinfo{person}{John
  Lapinskas}, {and} \bibinfo{person}{Kitty Meeks}.}
  \bibinfo{year}{2020}\natexlab{}.
\newblock \showarticletitle{Approximately counting and sampling small witnesses
  using a colourful decision oracle}. In \bibinfo{booktitle}{\emph{Proceedings
  of the 2020 {ACM-SIAM} Symposium on Discrete Algorithms, {SODA} 2020, Salt
  Lake City, UT, USA, January 5-8, 2020}},
  \bibfield{editor}{\bibinfo{person}{Shuchi Chawla}} (Ed.).
  \bibinfo{publisher}{{SIAM}}, \bibinfo{pages}{2201--2211}.
\newblock
\urldef\tempurl%
\url{https://doi.org/10.1137/1.9781611975994.135}
\showDOI{\tempurl}


\bibitem[\protect\citeauthoryear{Dyer, Goldberg, Greenhill, and Jerrum}{Dyer
  et~al\mbox{.}}{2004}]%
        {dggj-approx}
\bibfield{author}{\bibinfo{person}{Martin~E. Dyer}, \bibinfo{person}{Leslie~Ann
  Goldberg}, \bibinfo{person}{Catherine~S. Greenhill}, {and}
  \bibinfo{person}{Mark Jerrum}.} \bibinfo{year}{2004}\natexlab{}.
\newblock \showarticletitle{The Relative Complexity of Approximate Counting
  Problems}.
\newblock \bibinfo{journal}{\emph{Algorithmica}} \bibinfo{volume}{38},
  \bibinfo{number}{3} (\bibinfo{year}{2004}), \bibinfo{pages}{471--500}.
\newblock
\urldef\tempurl%
\url{https://doi.org/10.1007/s00453-003-1073-y}
\showDOI{\tempurl}


\bibitem[\protect\citeauthoryear{Gajentaan and Overmars}{Gajentaan and
  Overmars}{1995}]%
        {GO-3sum}
\bibfield{author}{\bibinfo{person}{Anka Gajentaan} {and}
  \bibinfo{person}{Mark~H. Overmars}.} \bibinfo{year}{1995}\natexlab{}.
\newblock \showarticletitle{On a Class of O(n2) Problems in Computational
  Geometry}.
\newblock \bibinfo{journal}{\emph{Comput. Geom.}}  \bibinfo{volume}{5}
  (\bibinfo{year}{1995}), \bibinfo{pages}{165--185}.
\newblock
\urldef\tempurl%
\url{https://doi.org/10.1016/0925-7721(95)00022-2}
\showDOI{\tempurl}


\bibitem[\protect\citeauthoryear{Gao, Impagliazzo, Kolokolova, and
  Williams}{Gao et~al\mbox{.}}{2017}]%
        {DBLP:conf/soda/GaoIKW17}
\bibfield{author}{\bibinfo{person}{Jiawei Gao}, \bibinfo{person}{Russell
  Impagliazzo}, \bibinfo{person}{Antonina Kolokolova}, {and}
  \bibinfo{person}{Richard~Ryan Williams}.} \bibinfo{year}{2017}\natexlab{}.
\newblock \showarticletitle{Completeness for First-Order Properties on Sparse
  Structures with Algorithmic Applications}. In
  \bibinfo{booktitle}{\emph{Proceedings of the Twenty-Eighth Annual {ACM-SIAM}
  Symposium on Discrete Algorithms, {SODA} 2017, Barcelona, Spain, Hotel Porta
  Fira, January 16-19}}, \bibfield{editor}{\bibinfo{person}{Philip~N. Klein}}
  (Ed.). \bibinfo{publisher}{{SIAM}}, \bibinfo{pages}{2162--2181}.
\newblock
\urldef\tempurl%
\url{https://doi.org/10.1137/1.9781611974782.141}
\showDOI{\tempurl}


\bibitem[\protect\citeauthoryear{Hertli}{Hertli}{2014}]%
        {DBLP:journals/siamcomp/Hertli14}
\bibfield{author}{\bibinfo{person}{Timon Hertli}.}
  \bibinfo{year}{2014}\natexlab{}.
\newblock \showarticletitle{3-SAT Faster and Simpler - Unique-SAT Bounds for
  {PPSZ} Hold in General}.
\newblock \bibinfo{journal}{\emph{{SIAM} J. Comput.}} \bibinfo{volume}{43},
  \bibinfo{number}{2} (\bibinfo{year}{2014}), \bibinfo{pages}{718--729}.
\newblock
\urldef\tempurl%
\url{https://doi.org/10.1137/120868177}
\showDOI{\tempurl}


\bibitem[\protect\citeauthoryear{Impagliazzo, Matthews, and Paturi}{Impagliazzo
  et~al\mbox{.}}{2012}]%
        {DBLP:conf/soda/ImpagliazzoMP12}
\bibfield{author}{\bibinfo{person}{Russell Impagliazzo},
  \bibinfo{person}{William Matthews}, {and} \bibinfo{person}{Ramamohan
  Paturi}.} \bibinfo{year}{2012}\natexlab{}.
\newblock \showarticletitle{A satisfiability algorithm for
  AC\({}^{\mbox{0}}\)}. In \bibinfo{booktitle}{\emph{Proceedings of the
  Twenty-Third Annual {ACM-SIAM} Symposium on Discrete Algorithms, {SODA} 2012,
  Kyoto, Japan, January 17-19, 2012}}, \bibfield{editor}{\bibinfo{person}{Yuval
  Rabani}} (Ed.). \bibinfo{publisher}{{SIAM}}, \bibinfo{pages}{961--972}.
\newblock
\urldef\tempurl%
\url{https://doi.org/10.1137/1.9781611973099.77}
\showDOI{\tempurl}


\bibitem[\protect\citeauthoryear{Impagliazzo and Paturi}{Impagliazzo and
  Paturi}{2001}]%
        {IP-ETH}
\bibfield{author}{\bibinfo{person}{Russell Impagliazzo} {and}
  \bibinfo{person}{Ramamohan Paturi}.} \bibinfo{year}{2001}\natexlab{}.
\newblock \showarticletitle{On the Complexity of k-SAT}.
\newblock \bibinfo{journal}{\emph{J. Comput. Syst. Sci.}} \bibinfo{volume}{62},
  \bibinfo{number}{2} (\bibinfo{year}{2001}), \bibinfo{pages}{367--375}.
\newblock
\urldef\tempurl%
\url{https://doi.org/10.1006/jcss.2000.1727}
\showDOI{\tempurl}


\bibitem[\protect\citeauthoryear{Impagliazzo, Paturi, and Zane}{Impagliazzo
  et~al\mbox{.}}{2001}]%
        {IPZ-spars}
\bibfield{author}{\bibinfo{person}{Russell Impagliazzo},
  \bibinfo{person}{Ramamohan Paturi}, {and} \bibinfo{person}{Francis Zane}.}
  \bibinfo{year}{2001}\natexlab{}.
\newblock \showarticletitle{Which Problems Have Strongly Exponential
  Complexity?}
\newblock \bibinfo{journal}{\emph{J. Comput. Syst. Sci.}} \bibinfo{volume}{63},
  \bibinfo{number}{4} (\bibinfo{year}{2001}), \bibinfo{pages}{512--530}.
\newblock
\urldef\tempurl%
\url{https://doi.org/10.1006/jcss.2001.1774}
\showDOI{\tempurl}


\bibitem[\protect\citeauthoryear{Janson, Luczak, and Rucinski}{Janson
  et~al\mbox{.}}{2000}]%
        {JLR}
\bibfield{author}{\bibinfo{person}{Svante Janson}, \bibinfo{person}{Tomasz
  Luczak}, {and} \bibinfo{person}{Andrzej Rucinski}.}
  \bibinfo{year}{2000}\natexlab{}.
\newblock \bibinfo{booktitle}{\emph{Random graphs}}.
\newblock \bibinfo{publisher}{Wiley}.
\newblock
\showISBNx{978-0-471-17541-4}
\urldef\tempurl%
\url{https://doi.org/10.1002/9781118032718}
\showDOI{\tempurl}


\bibitem[\protect\citeauthoryear{Jerrum, Sinclair, and Vigoda}{Jerrum
  et~al\mbox{.}}{2004}]%
        {DBLP:journals/jacm/JerrumSV04}
\bibfield{author}{\bibinfo{person}{Mark Jerrum}, \bibinfo{person}{Alistair
  Sinclair}, {and} \bibinfo{person}{Eric Vigoda}.}
  \bibinfo{year}{2004}\natexlab{}.
\newblock \showarticletitle{A polynomial-time approximation algorithm for the
  permanent of a matrix with nonnegative entries}.
\newblock \bibinfo{journal}{\emph{J. {ACM}}} \bibinfo{volume}{51},
  \bibinfo{number}{4} (\bibinfo{year}{2004}), \bibinfo{pages}{671--697}.
\newblock
\urldef\tempurl%
\url{https://doi.org/10.1145/1008731.1008738}
\showDOI{\tempurl}


\bibitem[\protect\citeauthoryear{Kutzkov}{Kutzkov}{2007}]%
        {DBLP:journals/ipl/Kutzkov07}
\bibfield{author}{\bibinfo{person}{Konstantin Kutzkov}.}
  \bibinfo{year}{2007}\natexlab{}.
\newblock \showarticletitle{New upper bound for the {\#}3-SAT problem}.
\newblock \bibinfo{journal}{\emph{Inf. Process. Lett.}} \bibinfo{volume}{105},
  \bibinfo{number}{1} (\bibinfo{year}{2007}), \bibinfo{pages}{1--5}.
\newblock
\urldef\tempurl%
\url{https://doi.org/10.1016/j.ipl.2007.06.017}
\showDOI{\tempurl}


\bibitem[\protect\citeauthoryear{M{\"{u}}ller}{M{\"{u}}ller}{2006}]%
        {MullerCounting}
\bibfield{author}{\bibinfo{person}{Moritz M{\"{u}}ller}.}
  \bibinfo{year}{2006}\natexlab{}.
\newblock \showarticletitle{Randomized Approximations of Parameterized Counting
  Problems}. In \bibinfo{booktitle}{\emph{Parameterized and Exact Computation,
  Second International Workshop, {IWPEC} 2006, Z{\"{u}}rich, Switzerland,
  September 13-15, 2006, Proceedings}} \emph{(\bibinfo{series}{Lecture Notes in
  Computer Science}, Vol.~\bibinfo{volume}{4169})},
  \bibfield{editor}{\bibinfo{person}{Hans~L. Bodlaender} {and}
  \bibinfo{person}{Michael~A. Langston}} (Eds.). \bibinfo{publisher}{Springer},
  \bibinfo{pages}{50--59}.
\newblock
\urldef\tempurl%
\url{https://doi.org/10.1007/11847250_5}
\showDOI{\tempurl}


\bibitem[\protect\citeauthoryear{Patrascu}{Patrascu}{2010}]%
        {Patrascu-3sum}
\bibfield{author}{\bibinfo{person}{Mihai Patrascu}.}
  \bibinfo{year}{2010}\natexlab{}.
\newblock \showarticletitle{Towards polynomial lower bounds for dynamic
  problems}. In \bibinfo{booktitle}{\emph{Proceedings of the 42nd {ACM}
  Symposium on Theory of Computing, {STOC} 2010, Cambridge, Massachusetts, USA,
  5-8 June 2010}}, \bibfield{editor}{\bibinfo{person}{Leonard~J. Schulman}}
  (Ed.). \bibinfo{publisher}{{ACM}}, \bibinfo{pages}{603--610}.
\newblock
\urldef\tempurl%
\url{https://doi.org/10.1145/1806689.1806772}
\showDOI{\tempurl}


\bibitem[\protect\citeauthoryear{Paturi, Pudl{\'{a}}k, Saks, and Zane}{Paturi
  et~al\mbox{.}}{2005}]%
        {PPSZ}
\bibfield{author}{\bibinfo{person}{Ramamohan Paturi}, \bibinfo{person}{Pavel
  Pudl{\'{a}}k}, \bibinfo{person}{Michael~E. Saks}, {and}
  \bibinfo{person}{Francis Zane}.} \bibinfo{year}{2005}\natexlab{}.
\newblock \showarticletitle{An improved exponential-time algorithm for
  \emph{k}-SAT}.
\newblock \bibinfo{journal}{\emph{J. {ACM}}} \bibinfo{volume}{52},
  \bibinfo{number}{3} (\bibinfo{year}{2005}), \bibinfo{pages}{337--364}.
\newblock
\urldef\tempurl%
\url{https://doi.org/10.1145/1066100.1066101}
\showDOI{\tempurl}


\bibitem[\protect\citeauthoryear{Schmitt and Wanka}{Schmitt and Wanka}{2013}]%
        {wanka-ksat}
\bibfield{author}{\bibinfo{person}{Manuel Schmitt} {and} \bibinfo{person}{Rolf
  Wanka}.} \bibinfo{year}{2013}\natexlab{}.
\newblock \showarticletitle{Exploiting independent subformulas: {A} faster
  approximation scheme for {\#}k-{SAT}}.
\newblock \bibinfo{journal}{\emph{Inf. Process. Lett.}} \bibinfo{volume}{113},
  \bibinfo{number}{9} (\bibinfo{year}{2013}), \bibinfo{pages}{337--344}.
\newblock
\urldef\tempurl%
\url{https://doi.org/10.1016/j.ipl.2013.02.013}
\showDOI{\tempurl}


\bibitem[\protect\citeauthoryear{Siemons}{Siemons}{1989}]%
        {mcdiarmid}
\bibfield{editor}{\bibinfo{person}{Johannes Siemons}} (Ed.).
  \bibinfo{year}{1989}\natexlab{}.
\newblock \bibinfo{booktitle}{\emph{Surveys in Combinatorics, 1989}}.
\newblock \bibinfo{publisher}{Cambridge University Press}.
\newblock
\urldef\tempurl%
\url{https://doi.org/10.1017/cbo9781107359949}
\showDOI{\tempurl}


\bibitem[\protect\citeauthoryear{Sipser}{Sipser}{1983}]%
        {Sipser}
\bibfield{author}{\bibinfo{person}{Michael Sipser}.}
  \bibinfo{year}{1983}\natexlab{}.
\newblock \showarticletitle{A Complexity Theoretic Approach to Randomness}. In
  \bibinfo{booktitle}{\emph{Proceedings of the 15th Annual {ACM} Symposium on
  Theory of Computing, 25-27 April, 1983, Boston, Massachusetts, {USA}}},
  \bibfield{editor}{\bibinfo{person}{David~S. Johnson}, \bibinfo{person}{Ronald
  Fagin}, \bibinfo{person}{Michael~L. Fredman}, \bibinfo{person}{David Harel},
  \bibinfo{person}{Richard~M. Karp}, \bibinfo{person}{Nancy~A. Lynch},
  \bibinfo{person}{Christos~H. Papadimitriou}, \bibinfo{person}{Ronald~L.
  Rivest}, \bibinfo{person}{Walter~L. Ruzzo}, {and} \bibinfo{person}{Joel~I.
  Seiferas}} (Eds.). \bibinfo{publisher}{{ACM}}, \bibinfo{pages}{330--335}.
\newblock
\urldef\tempurl%
\url{https://doi.org/10.1145/800061.808762}
\showDOI{\tempurl}


\bibitem[\protect\citeauthoryear{Stockmeyer}{Stockmeyer}{1985}]%
        {Stockmeyer}
\bibfield{author}{\bibinfo{person}{Larry~J. Stockmeyer}.}
  \bibinfo{year}{1985}\natexlab{}.
\newblock \showarticletitle{On Approximation Algorithms for {\#}P}.
\newblock \bibinfo{journal}{\emph{{SIAM} J. Comput.}} \bibinfo{volume}{14},
  \bibinfo{number}{4} (\bibinfo{year}{1985}), \bibinfo{pages}{849--861}.
\newblock
\urldef\tempurl%
\url{https://doi.org/10.1137/0214060}
\showDOI{\tempurl}


\bibitem[\protect\citeauthoryear{Thurley}{Thurley}{2012}]%
        {DBLP:conf/stacs/Thurley12}
\bibfield{author}{\bibinfo{person}{Marc Thurley}.}
  \bibinfo{year}{2012}\natexlab{}.
\newblock \showarticletitle{An Approximation Algorithm for {\#}k-SAT}. In
  \bibinfo{booktitle}{\emph{29th International Symposium on Theoretical Aspects
  of Computer Science, {STACS} 2012, February 29th - March 3rd, 2012, Paris,
  France}} \emph{(\bibinfo{series}{LIPIcs}, Vol.~\bibinfo{volume}{14})},
  \bibfield{editor}{\bibinfo{person}{Christoph D{\"{u}}rr} {and}
  \bibinfo{person}{Thomas Wilke}} (Eds.). \bibinfo{publisher}{Schloss Dagstuhl
  - Leibniz-Zentrum f{\"{u}}r Informatik}, \bibinfo{pages}{78--87}.
\newblock
\urldef\tempurl%
\url{https://doi.org/10.4230/LIPIcs.STACS.2012.78}
\showDOI{\tempurl}


\bibitem[\protect\citeauthoryear{Toda}{Toda}{1991}]%
        {Toda}
\bibfield{author}{\bibinfo{person}{Seinosuke Toda}.}
  \bibinfo{year}{1991}\natexlab{}.
\newblock \showarticletitle{{PP} is as Hard as the Polynomial-Time Hierarchy}.
\newblock \bibinfo{journal}{\emph{{SIAM} J. Comput.}} \bibinfo{volume}{20},
  \bibinfo{number}{5} (\bibinfo{year}{1991}), \bibinfo{pages}{865--877}.
\newblock
\urldef\tempurl%
\url{https://doi.org/10.1137/0220053}
\showDOI{\tempurl}


\bibitem[\protect\citeauthoryear{Traxler}{Traxler}{2016}]%
        {Traxler}
\bibfield{author}{\bibinfo{person}{Patrick Traxler}.}
  \bibinfo{year}{2016}\natexlab{}.
\newblock \showarticletitle{The Relative Exponential Time Complexity of
  Approximate Counting Satisfying Assignments}.
\newblock \bibinfo{journal}{\emph{Algorithmica}} \bibinfo{volume}{75},
  \bibinfo{number}{2} (\bibinfo{year}{2016}), \bibinfo{pages}{339--362}.
\newblock
\urldef\tempurl%
\url{https://doi.org/10.1007/s00453-016-0134-y}
\showDOI{\tempurl}


\bibitem[\protect\citeauthoryear{Valiant}{Valiant}{1979}]%
        {ValPerm}
\bibfield{author}{\bibinfo{person}{Leslie~G. Valiant}.}
  \bibinfo{year}{1979}\natexlab{}.
\newblock \showarticletitle{The Complexity of Computing the Permanent}.
\newblock \bibinfo{journal}{\emph{Theor. Comput. Sci.}}  \bibinfo{volume}{8}
  (\bibinfo{year}{1979}), \bibinfo{pages}{189--201}.
\newblock
\urldef\tempurl%
\url{https://doi.org/10.1016/0304-3975(79)90044-6}
\showDOI{\tempurl}


\bibitem[\protect\citeauthoryear{Valiant and Vazirani}{Valiant and
  Vazirani}{1986}]%
        {VV}
\bibfield{author}{\bibinfo{person}{Leslie~G. Valiant} {and}
  \bibinfo{person}{Vijay~V. Vazirani}.} \bibinfo{year}{1986}\natexlab{}.
\newblock \showarticletitle{{NP} is as Easy as Detecting Unique Solutions}.
\newblock \bibinfo{journal}{\emph{Theor. Comput. Sci.}} \bibinfo{volume}{47},
  \bibinfo{number}{3} (\bibinfo{year}{1986}), \bibinfo{pages}{85--93}.
\newblock
\urldef\tempurl%
\url{https://doi.org/10.1016/0304-3975(86)90135-0}
\showDOI{\tempurl}


\bibitem[\protect\citeauthoryear{Vassilevska{ }Williams}{Vassilevska{
  }Williams}{2015}]%
        {Williams-3sum}
\bibfield{author}{\bibinfo{person}{Virginia Vassilevska{ }Williams}.}
  \bibinfo{year}{2015}\natexlab{}.
\newblock \showarticletitle{Hardness of Easy Problems: Basing Hardness on
  Popular Conjectures such as the Strong Exponential Time Hypothesis (Invited
  Talk)}. In \bibinfo{booktitle}{\emph{10th International Symposium on
  Parameterized and Exact Computation, {IPEC} 2015, September 16-18, 2015,
  Patras, Greece}} \emph{(\bibinfo{series}{LIPIcs},
  Vol.~\bibinfo{volume}{43})}, \bibfield{editor}{\bibinfo{person}{Thore
  Husfeldt} {and} \bibinfo{person}{Iyad~A. Kanj}} (Eds.).
  \bibinfo{publisher}{Schloss Dagstuhl - Leibniz-Zentrum f{\"{u}}r Informatik},
  \bibinfo{pages}{17--29}.
\newblock
\urldef\tempurl%
\url{https://doi.org/10.4230/LIPIcs.IPEC.2015.17}
\showDOI{\tempurl}


\bibitem[\protect\citeauthoryear{Vassilevska{ }Williams and
  Williams}{Vassilevska{ }Williams and Williams}{2018}]%
        {WW-NWT}
\bibfield{author}{\bibinfo{person}{Virginia Vassilevska{ }Williams} {and}
  \bibinfo{person}{Richard~Ryan Williams}.} \bibinfo{year}{2018}\natexlab{}.
\newblock \showarticletitle{Subcubic Equivalences Between Path, Matrix, and
  Triangle Problems}.
\newblock \bibinfo{journal}{\emph{J. {ACM}}} \bibinfo{volume}{65},
  \bibinfo{number}{5} (\bibinfo{year}{2018}), \bibinfo{pages}{27:1--27:38}.
\newblock
\urldef\tempurl%
\url{https://doi.org/10.1145/3186893}
\showDOI{\tempurl}


\bibitem[\protect\citeauthoryear{Williams}{Williams}{2005}]%
        {Williams-OVSETH}
\bibfield{author}{\bibinfo{person}{Richard~Ryan Williams}.}
  \bibinfo{year}{2005}\natexlab{}.
\newblock \showarticletitle{A new algorithm for optimal 2-constraint
  satisfaction and its implications}.
\newblock \bibinfo{journal}{\emph{Theor. Comput. Sci.}} \bibinfo{volume}{348},
  \bibinfo{number}{2-3} (\bibinfo{year}{2005}), \bibinfo{pages}{357--365}.
\newblock
\urldef\tempurl%
\url{https://doi.org/10.1016/j.tcs.2005.09.023}
\showDOI{\tempurl}


\bibitem[\protect\citeauthoryear{Williams}{Williams}{2014}]%
        {Williams-APSP}
\bibfield{author}{\bibinfo{person}{Richard~Ryan Williams}.}
  \bibinfo{year}{2014}\natexlab{}.
\newblock \showarticletitle{Faster all-pairs shortest paths via circuit
  complexity}. In \bibinfo{booktitle}{\emph{Symposium on Theory of Computing,
  {STOC} 2014, New York, NY, USA, May 31 - June 03, 2014}},
  \bibfield{editor}{\bibinfo{person}{David~B. Shmoys}} (Ed.).
  \bibinfo{publisher}{{ACM}}, \bibinfo{pages}{664--673}.
\newblock
\urldef\tempurl%
\url{https://doi.org/10.1145/2591796.2591811}
\showDOI{\tempurl}


\bibitem[\protect\citeauthoryear{Williams}{Williams}{2018}]%
        {williams-ov-hard}
\bibfield{author}{\bibinfo{person}{Richard~Ryan Williams}.}
  \bibinfo{year}{2018}\natexlab{}.
\newblock \showarticletitle{Counting Solutions to Polynomial Systems via
  Reductions}. In \bibinfo{booktitle}{\emph{1st Symposium on Simplicity in
  Algorithms, {SOSA} 2018, January 7-10, 2018, New Orleans, LA, {USA}}}
  \emph{(\bibinfo{series}{{OASICS}}, Vol.~\bibinfo{volume}{61})},
  \bibfield{editor}{\bibinfo{person}{Raimund Seidel}} (Ed.).
  \bibinfo{publisher}{Schloss Dagstuhl - Leibniz-Zentrum f{\"{u}}r Informatik},
  \bibinfo{pages}{6:1--6:15}.
\newblock
\urldef\tempurl%
\url{https://doi.org/10.4230/OASIcs.SOSA.2018.6}
\showDOI{\tempurl}


\bibitem[\protect\citeauthoryear{Williams and Yu}{Williams and Yu}{2014}]%
        {WY-OV-algo}
\bibfield{author}{\bibinfo{person}{Richard~Ryan Williams} {and}
  \bibinfo{person}{Huacheng Yu}.} \bibinfo{year}{2014}\natexlab{}.
\newblock \showarticletitle{Finding orthogonal vectors in discrete structures}.
  In \bibinfo{booktitle}{\emph{Proceedings of the Twenty-Fifth Annual
  {ACM-SIAM} Symposium on Discrete Algorithms, {SODA} 2014, Portland, Oregon,
  USA, January 5-7, 2014}}, \bibfield{editor}{\bibinfo{person}{Chandra
  Chekuri}} (Ed.). \bibinfo{publisher}{{SIAM}}, \bibinfo{pages}{1867--1877}.
\newblock
\urldef\tempurl%
\url{https://doi.org/10.1137/1.9781611973402.135}
\showDOI{\tempurl}


\end{thebibliography}
\end{document}